\documentclass[11pt]{article}

\usepackage[T1]{fontenc}
\usepackage{microtype}
\usepackage{amsmath,amsthm}
\usepackage{newpxtext,newpxmath}
\usepackage[margin=15pt,font=small,labelfont={bf}]{caption}
\usepackage[margin=1in]{geometry}
\usepackage[english]{babel}
\usepackage[utf8]{inputenc}

\usepackage[backend=biber,style=alphabetic,maxalphanames=4,minalphanames=4,maxbibnames=99,minbibnames=99]{biblatex}
\usepackage{graphicx}
\usepackage{algorithm}
\usepackage[noend]{algpseudocode}

\AtBeginDocument{
  }
\usepackage{xcolor}
\usepackage[colorlinks=true,
    citecolor=darkgreen,
    linkcolor=darkred
]{hyperref}
\usepackage{aliascnt}
\definecolor{darkgreen}{rgb}{0,0.5,0.5}
\definecolor{darkred}{rgb}{0.5,0,0}
\newcommand\drop[1]{}
\newcommand*{\newaliastheorem}[3]{
  \newaliascnt{#1}{#2}
  \newtheorem{#1}[#1]{#3}
  \aliascntresetthe{#1}
  \expandafter\newcommand\csname #1autorefname\endcsname{#3}
}
\newtheorem{theorem}{Theorem}
\newaliastheorem{proposition}{theorem}{Proposition}
\newaliastheorem{lemma}{theorem}{Lemma}
\newaliastheorem{corollary}{theorem}{Corollary}
\newaliastheorem{definition}{theorem}{Definition}
\newaliastheorem{conjecture}{theorem}{Conjecture}
\newaliastheorem{example}{theorem}{Example}
\newaliastheorem{remark}{theorem}{Remark}

\addto\extrasenglish{

}

\title{Online Coloring for Graphs of Large Odd Girth}
\author{
Hirotaka Yoneda\thanks{Supported by JSPS JP25K24465, JST ASPIRE JPMJAP2302, and JST ACT-X JPMJAX25CT.} \\
The University of Tokyo, Japan \\
squar37@gmail.com
\and
Masataka Yoneda\thanks{Supported by JSPS JP25K24465 and JST ASPIRE JPMJAP2302.} \\
The University of Tokyo, Japan \\
e869120@gmail.com
}
\date{}
\begin{document}
\maketitle

\pagenumbering{gobble}

\begin{abstract}
We study the problem of online coloring for graphs with large odd girth. The best previously known algorithm uses $O(n^{1/2})$ colors, which was discovered by Kierstead in 1998. This algorithm works when the odd girth is 7 or more. In this paper, we provide the following: 
for every $\varepsilon > 0$, there exists a constant $g' \in \{3, 5, 7, \dots\}$ such that graphs with odd girth at least $g'$ can be deterministically colored online using $O(n^{\varepsilon})$ colors. 
\end{abstract}

\pagenumbering{arabic}
\setcounter{page}{1}

\section{Introduction}

We study the \emph{online coloring} problem, which concerns online algorithms for graph coloring. In this setting, the vertices are revealed one by one, and we must assign each vertex a color immediately after it is revealed, so that its color differs from the colors of all its neighbors. The goal is to design a strategy that minimizes the number of colors used.

However, since the algorithm has no knowledge of future vertices, it requires a clever strategy to color the graph effectively. For example, one strategy is to assign the least-indexed available color to each arriving vertex. This strategy is called First-Fit \cite{GL88}. However, it costs $\frac{1}{2} n$ colors even for bipartite graphs (\autoref{fig:intro-ff}), which shows the ineffectiveness of First-Fit.

The problem is known to be difficult for general graphs, where a deterministic algorithm that achieves a competitive ratio of $O(\frac{n}{\log \log n})$ was very recently given by Kawarabayashi, Yoneda, Yoneda \cite{KYY25}. In contrast, the competitive ratio of any algorithm is $\Omega(\frac{n}{(\log n)^2})$, due to Halld\'{o}rsson and Szegedy \cite{HS94}.

Given strong negative results for general graphs, effective online coloring algorithms for special graph classes have also received attention. For example, for tree, bipartite, planar, bounded-treewidth, or disk graphs, deterministic algorithms that use $O(\log n)$ colors are known \cite{Bea76, LST89, Ira94, AS21}. For interval graphs, a deterministic algorithm with a competitive ratio of $3$ is given by Kierstead and Trotter \cite{KT81}.

\begin{figure}[b]
    \centering
    \includegraphics[width=\linewidth]{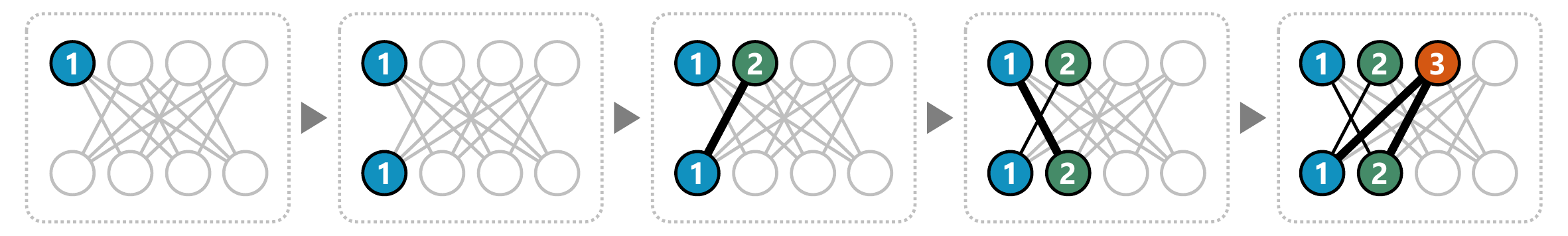}
    \caption{A worst-case input for the First-Fit algorithm, which uses $\frac{1}{2} n$ colors for bipartite graphs, where $n$ is the number of vertices. The number written inside each vertex is the color.}
    \label{fig:intro-ff}
\end{figure}

\subsection{The Odd Girth: Almost Bipartite Graphs}

In graph theory, graphs with large odd girth have been widely studied. The odd girth $g$ is the length of the shortest odd cycle in a graph. This is closely connected to the chromatic number --- as suggested by the intuition that graphs with large odd girth can be interpreted as ``almost bipartite graphs'' (cf. bipartite graphs have no odd cycle).

The chromatic number of graphs with odd girth at least $5$ (i.e., triangle-free graphs) is a classical problem in graph theory, which is also studied as the asymptotics of the Ramsey number $R(3, k)$. Mycielski (1955) \cite{mycie} introduced a construction for such graphs with arbitrarily large chromatic number. Later, the maximum chromatic number of such graphs was shown to be $\Theta(\sqrt{n / \log n})$, where the lower bound is given by Ajtai, Koml\'{o}s, Szemer\'{e}di (1980) \cite{AKS80} and the upper bound is given by Kim (1995) \cite{Kim95}. On the other hand, the existence of graphs with an arbitrarily large girth (not just ``odd girth'') and an arbitrarily large chromatic number was shown by the famous result of Erd\H{o}s (1959) \cite{Erdos1959}. This was a landmark result that established the proof technique of probabilistic method.

To the best of our knowledge, the best-known bounds on the chromatic number of graphs with general constant odd girth $g \geq 5$ are $O((\frac{n}{\log n})^{\frac{2}{g-1}})$ by Denley (1994) \cite{Den94}, and the existence of graphs with chromatic number $\Omega(n^{\frac{1}{g-3}} / \log n)$ was given by Krivelevich (1995) \cite[Theorem 1]{Kri95}.

\subsection{Previous Results}

Graphs with odd girth $g$ had been extensively studied also in online coloring.

\begin{description}
    \item[\textbf{Algorithms.}] For the case $g = 5$ (triangle-free graphs), the best-known deterministic algorithm by Brada\v{c} et al. (2026) \cite{BFS+26} uses $O(\frac{n \log \log n}{\log n})$ colors. For the case $g = 7$ ($(C_3, C_5)$-free graphs), Kierstead (1998) \cite{Kie98} presented a deterministic algorithm that uses $O(\sqrt{n})$ colors. For general odd girth $g \geq 9$, Nagy-Gy\"{o}rgy (2010) \cite{Nag10} presented a deterministic algorithm that uses $O(\sqrt{\frac{n \log g}{g}})$ colors. However, the $O(\sqrt{n})$ bound still stands when $g$ is a constant.
    \item[\textbf{Lower bounds.}] It is known that any deterministic algorithm uses $\Omega(n^{1/2})$ colors for the case $g = 5$ \cite{DKV13} and $\Omega((\frac{n}{\log n})^{1/3})$ colors for the case $g = 7$ \cite{GKM+14}. For $g \geq 9$, the chromatic number lower bound by Krivelevich \cite{Kri95} remains the best known.
\end{description}

A large gap remains between the upper and lower bounds. A natural question is whether the 28-year-old square-root barrier can be broken and $O(n^{1/2-\varepsilon})$ colors can be achieved for some $\varepsilon > 0$. If so, another question is how much further the exponent can be improved.

\subsection{Our Contributions}

In this paper, we answer the 28-year-old question affirmatively with the following theorem:

\begin{theorem}\label{thm:generalize-1}
    For every $\varepsilon > 0$, there exists a constant $g' \in \{3, 5, 7, \dots\}$ such that graphs with odd girth at least $g'$ can be deterministically colored online using $O(n^{\varepsilon})$ colors.
\end{theorem}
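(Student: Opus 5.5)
Our plan is a recursive, Kierstead-style scheme. The exponent drops by a constant factor at each level of recursion, and each level uses up a constant amount of odd girth. The basic tool is the following fact. Let $v$ be a vertex and consider its ball of radius $r$. If $2r+1 < g$, this ball induces a bipartite graph. Moreover, the bipartition is given by the parity of the distance from $v$, and that distance can be read off online from the graph revealed so far.

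More precisely, we will prove by induction on $k$ the following claim. For each $k$ there is an odd $g_k$ (roughly $g_k = O(k)$, or at worst $2^{O(k)}$) and a deterministic online algorithm $A_k$ that colors graphs of odd girth at least $g_k$ with $O(n^{1/(k+1)}\,\mathrm{polylog}\,n)$ colors. Choosing $k$ with $1/(k+1) < \varepsilon$ then gives Theorem~\ref{thm:generalize-1}, absorbing the polylog factor by taking $k$ slightly larger.

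The base case $k=0$ is trivial: $n$ colors suffice. For the inductive step, fix a threshold $d = n^{1/(k+2)}$ and process each arriving vertex $v$ as follows.

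\textbf{Low-degree case.} Suppose $v$ has fewer than $d$ neighbors among the vertices already placed in a ``low'' First-Fit-type class. Then we color $v$ greedily from a palette of size about $d$. This is exactly the Kierstead mechanism used for $g=7$.

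\textbf{High-degree case.} Otherwise $v$ has at least $d$ earlier neighbors. We select a ``witness'' vertex $w$ whose neighborhood (or ball of radius $r_k$) already contains many vertices. Each witness $w$ receives its own sub-instance. Vertices attached to $w$ are split by parity of distance to $w$. Each parity class is independent at radius $1$, and more generally is bipartite-free of short odd cycles up to the remaining girth.

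The point of the witnesses is a counting argument. Each witness's ball absorbs at least $d$ vertices, and vertices are charged disjointly. Hence there are at most $n/d$ witnesses, or we group them into $O(\log n)$ levels. Within a witness group, we recursively run $A_k$ on a graph of odd girth at least $g_k$, using $g_{k+1} \ge g_k + O(1)$. The recursive call should be applied not to each witness separately, which would cost $(n/d)\cdot$colors. Instead it is applied to an auxiliary structure in which the witnesses themselves are organized into a graph of size $n/d$. The colors of different witnesses then combine like a product.

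Concretely, I aim for the bound
\[
C_{k+1}(n) \;\le\; O(d) + O(\log n)\cdot C_k(n/d)\ \text{(or a product variant)},
\]
which with $d=n^{1/(k+2)}$ yields $C_{k+1}(n)=O(n^{1/(k+2)}\mathrm{polylog}\,n)$.

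\textbf{Main obstacle.} I expect the hardest step to be the online consistency of the decomposition. A vertex must be assigned to a witness at its arrival time, before the witness's ball is complete. We must then guarantee two things. First, the subgraph handed to each recursive call really has the larger odd girth, which is automatic since it is a subgraph. Second, and more delicately, vertices in different witness groups that share a color are nonadjacent.

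For the second point I would first try the standard trick: make the color a pair consisting of the witness-group color and the inner color. Adjacent witnesses, or witnesses at distance at most $2r$, are then separated by a coloring of the ``witness conflict graph.'' To make this work, one must show that the witness conflict graph again has large odd girth, losing about a factor $2r+1$ in girth per level, and has only $n/d$ vertices so the recursion applies to it. Bounding the odd girth of this contracted graph is where the $g_k$ growth comes from, and it is the step I expect to require the real work.
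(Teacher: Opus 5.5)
There is a genuine gap. It lies in the step you treat as routine, not in the one you flag.

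\textbf{The gap.} Your recursion runs the online algorithm $A_k$ on the witness conflict graph $H$. Here $H$ has one vertex per witness, with an edge when some edge of $G$ joins vertices attached to the two witnesses. Each attached vertex is then colored by its witness's color. But $H$ is not an online instance in the vertex-arrival model. When witness $i$ is created, its first attached vertex must be colored at once, yet the edges of $i$ in $H$ are essentially unknown. An edge $i$--$j$ between two existing witnesses appears only later, when a newly arriving vertex attached to $i$ turns out to be adjacent to an old vertex attached to $j$. An algorithm that fixes the color of $i$ on arrival is helpless here: the adversary lets $i$ and $j$ get the same color and then creates the edge.

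To survive this, vertices of one witness group must be allowed to receive different colors over time. At that point you are no longer running $A_k$ on an online instance with $n/d$ vertices, so the recurrence $C_{k+1}(n)\le O(d)+O(\log n)\,C_k(n/d)$ has no justification. A warning sign: with a factor-$5$ girth loss per level, that recurrence would give exponent about $1/\log_5 g$, better than the paper's $\approx 2/\log_5 g$. With $g_k=O(k)$ it would even settle the conjecture in the paper's conclusion.

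The points you single out are the easy ones:
\begin{itemize}
\item Vertices can be assigned to witnesses at arrival exactly as in Kierstead's algorithm.
\item Each parity class of a witness group is already independent, so no inner recursion is needed.
\item An odd cycle of length $\ell$ in $H$ lifts to an odd closed walk of length at most $(d^*+3)\ell$ in $G$, where $d^*$ is the even-diameter of the bases.
\end{itemize}

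\textbf{What the paper does instead.} It supplies the two missing ingredients. First, online group coloring. Each group keeps a current color. When that color collides with a color ever used by a neighboring group, the group switches to the least color never used by any neighbor. A group switches at most once per new neighbor, so $\Delta^2+2$ colors suffice as long as every group has at most $\Delta$ neighbors in the group graph $H^+$. This graph is built only from vertices not absorbed by higher-level bases.

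Second, the paper does not recurse. A group $i$ whose degree in $H^+$ exceeds $\Delta$ triggers a merge. For each $t=0,1,2$, the bases $X_j$ and the sets $Y'_j$ over all $j$ at distance exactly $t$ from $i$ are united into two new bases, six in total, each of even-diameter at most $5d^*+14$. Every later vertex lying in, or adjacent to, these groups is absorbed by the new bases. Hence merge centers are pairwise at distance at least $3$ in the final $H^+$, their neighborhoods of size $>\Delta$ are disjoint, and at most $6r^*/\Delta$ new bases arise.

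This reduces the $(r^*,d^*)$-subroutine to the $(6r^*/\Delta,\,5d^*+14)$-subroutine at a cost of $\Delta^2+2$ colors. The quadratic cost is why $k$ iterations with $\Delta=6n^{1/(k+4)}$ give $O(k\,n^{2/(k+4)})$ colors at odd girth $\frac{11}{2}\cdot 5^k+\frac32$. Choosing $k$ with $2/(k+4)<\varepsilon$ yields the theorem.
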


More precisely, we present a deterministic online algorithm that colors graphs with odd girth $g \geq \frac{11}{2} \cdot 5^k + \frac{3}{2}$ using $O(k \cdot n^{\frac{2}{k+4}})$ colors, for every integer $k \geq 0$ (\autoref{thm:generalize-2}). This significantly improves the previous result of $O(n^{1/2})$ colors by Kierstead (1998) \cite{Kie98}.

Our result also achieves $O(\log n)$ colors for graphs with odd girth $g = \Omega(n^c)$, where $c \ (0 < c < 1)$ is a constant (\autoref{cor:generalize-3}). This substantially improves the previous result of $O(n^{(1-c)/2} \sqrt{\log n})$ colors by Nagy-Gy\"{o}rgy (2010) \cite{Nag10} and matches the $\Theta(\log n)$-color bound for bipartite graphs \cite{LST89}. \autoref{tab:colors-girth} shows the relationship between the odd girth and the number of colors used.

\begin{table}[t]
    \centering
    \caption{Comparison of our algorithm with previous results by odd girth. $c \ (0 < c < 1)$ is a constant.}
    \begin{tabular}{|l|rrrrrrr|}
        Layers $k$ & $0$ & $1$ & $2$ & $3$ & $\cdots$ & $c \log_5 n$ \\ \hline
        Odd Girth & $\geq 7$ & $\geq 29$ & $\geq 139$ & $\geq 689$ & $\cdots$ & $\geq \Omega(n^c)$ \\
        Previous \cite{Kie98, Nag10} & $O(n^{0.500})$ & $O(n^{0.500})$ & $O(n^{0.500})$ & $O(n^{0.500})$ & $\cdots$ & $O(n^{(1-c)/2} \sqrt{\log n})$ \\
        Our Result & $O(n^{0.500})$ & $O(n^{0.400})$ & $O(n^{0.334})$ & $O(n^{0.286})$ & $\cdots$ & $O(\log n)$ \\
    \end{tabular}
    \label{tab:colors-girth}
\end{table}

For online coloring in general, the bottleneck to achieving $O(n^{1/2-\varepsilon})$ colors in many algorithms, including Kierstead's $O(\sqrt{n})$-color algorithm for graphs of odd girth $g \geq 7$ and Vishwanathan's $O(\sqrt{n})$-color \emph{randomized} algorithm for $3$-colorable graphs \cite{Vis92}, is that they only utilize the structure of the (first) neighborhood.

Our algorithm makes use of the second neighborhood, and as a result, we achieve $O(n^{2/5})$ colors for graphs of odd girth $g \geq 29$ (\autoref{thm:main-1}). In addition, we generalize this result to the third neighborhood and further neighbors, to obtain our main result (\autoref{thm:generalize-1}). The challenge is that when each vertex has degree at least $\Delta$, the size of the $k$-th neighborhood may not be $\Omega(\Delta^k)$; in the worst case, the size can be as small as $O(\Delta)$. We overcome this issue by using a novel framework called \emph{online group coloring}, which allows us to apply an approach similar to First-Fit and benefit in cases where these neighborhoods are small.

\subsection{Odd Girth vs. Girth}

Some readers may wonder how many colors are needed for online coloring of graphs with girth $g$ (instead of odd girth). In this case, First-Fit performs effectively, where the following result was shown by Zaker (2007) \cite{Zak07}. We provide a simpler proof here:

\begin{proposition}[\cite{Zak07}] \label{prop:girth-ff}
    Let $k \geq 1$ be an integer. For a graph $G = (V, E)$ of girth $g \geq 2k+1$, First-Fit uses only $O(k \cdot n^{1/k})$ colors.
\end{proposition}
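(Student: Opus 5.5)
Suppose First-Fit assigns color $c$ to some vertex $v$. Our plan is to exhibit a large tree inside $G$ rooted at $v$, and then let the girth assumption turn that tree into many distinct vertices. By the First-Fit rule, when $v$ arrived it had, for every color $c' < c$, a neighbor already colored $c'$. Each such neighbor $u$ of color $c'$ in turn had neighbors of every color $< c'$ when it arrived. Iterating this gives a ``witness tree'' of depth $k$ rooted at $v$: each node colored $c'$ has children colored $1,\dots,c'-1$, and every edge of the tree is an edge of $G$.

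The girth is at least $2k+1$, so no cycle has length at most $2k$. Now take two root-to-node paths of length at most $k$ in the witness tree that end at the same vertex of $G$. Together they would form a closed walk of length at most $2k$, and this walk would contain a cycle of length at most $2k$ unless the two paths are identical. The walk fails to contain a cycle only when it degenerates, i.e.\ the two walks agree. We need care here because witness paths are not necessarily simple paths in $G$. However, the colors strictly decrease along every root-to-leaf path of the tree, so each path is a simple path. Two distinct simple paths from $v$ to the same endpoint, each of length at most $k$, have a symmetric difference containing a cycle of length at most $2k$. This is impossible, so all nodes of the witness tree at depth at most $k$ correspond to distinct vertices of $G$. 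This is the key step: the ball of radius $k$ around $v$ in $G$ is a tree, and distinct tree nodes are distinct vertices.

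It remains to count. Let $T_d(c)$ be the minimum number of depth-$\le d$ nodes in a witness tree whose root has color $c$. Then $T_0(c)=1$ and $T_d(c) = 1 + \sum_{c'<c} T_{d-1}(c')$. A node of color $c$ has at least $c-1$ children, and at least $c - c/k$ of those have color $\ge c/k$. Proceeding level by level, we lose a $c/k$ amount of color at each step, so after $k$ levels the colors are still at least $c/k$ if we restrict to colors above $(1-j/k)c$ at level $j$. Concretely, restrict at each level to children whose color is at least the current color minus $c/k$. Each node at level $j<k$ has color $\ge c - jc/k$ and at least $c/k - 1$ such children. This gives at least $(c/k-1)^k$ distinct vertices, all inside a graph on $n$ vertices, so $(c/k - 1)^k \le n$. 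Hence $c \le k(n^{1/k}+1) = O(k\, n^{1/k})$.

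The main obstacle is the distinctness argument, specifically making sure the girth bound $2k+1$ (not $2k+2$) suffices for depth $k$. Two distinct simple paths of lengths $\le k$ from a common start to a common end produce a cycle of length $\le 2k$, which the girth excludes, so the bound is exactly sufficient.
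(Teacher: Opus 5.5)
Your argument is correct and is essentially the paper's. Both proofs descend $k$ levels through strictly decreasing colors to get at least $d^k$ length-$k$ paths from a high-colored vertex, and both use girth $\geq 2k+1$ to force the endpoints to be distinct; you get $d \approx c/k - 1$ from windows of width $c/k$ below each node's color, while the paper gets $d = \lceil n^{1/k}\rceil$ from fixed color blocks $S_1,\dots,S_{k+1}$. One small point: your aside that the radius-$k$ ball around $v$ is a tree is not literally true, since an edge between two vertices at distance exactly $k$ from $v$ gives a $(2k+1)$-cycle, which the girth bound allows; your proof only uses the correct claim that distinct root-to-node paths of length at most $k$ end at distinct vertices.
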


\begin{figure}[h]
    \centering
    \includegraphics[width=\linewidth]{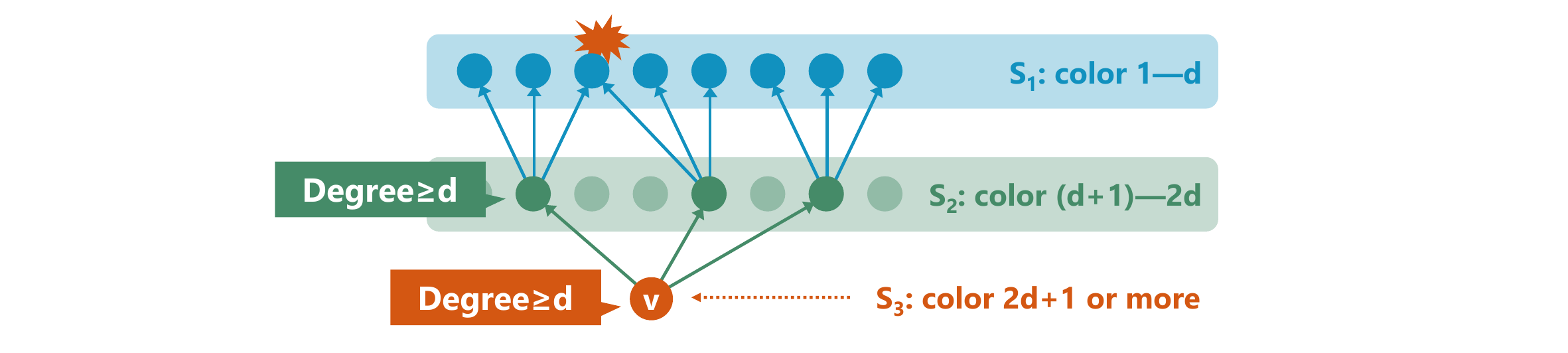}
    \caption{A sketch of our analysis for the case $k = 2$. If the endpoints of two paths clash, a cycle of length at most $4$ is formed, a contradiction. Hence, there must be at least $d^2$ vertices in $S_1$.}
    \label{fig:intro-girth}
\end{figure}

\begin{proof}
    Let $d = \lceil n^{1/k} \rceil$. For each $i = 1, \dots, k$, let $S_i$ be the set of vertices that are colored by colors $(i-1)d + 1, \dots, id$. Also, let $S_{k+1} = V \setminus (S_1 \cup \dots \cup S_k)$. Then, for each $i = 2, \dots, k+1$ and $v \in S_i$, we have $|N_{S_{i-1}}(v)| \geq d$. This is because $v$ must be adjacent to some vertices with each of color $(i-2)d + 1, \dots, (i-1)d$ by the definition of First-Fit.
    
    Suppose there exists a vertex $v$ in $S_{k+1}$. Then, there are at least $d^k$ paths of length $k$ that start from $v$ and pass through a vertex in $S_k, \dots, S_1$ in this order. For all of these paths, the endpoints (in $S_1$) are distinct. Otherwise, in the two such paths with a common endpoint, a cycle of length at most $2k$ would appear, which contradicts the girth requirement. Therefore, $|S_1| \geq d^k \ (\geq n)$ must hold (\autoref{fig:intro-girth}). However, we also have $|S_1| \leq n-1$, a contradiction. Thus, $S_{k+1}$ must be empty, meaning that First-Fit uses at most $kd = O(k \cdot n^{1/k})$ colors.
\end{proof}

We note that the above result matches the best-known upper bound $O((\frac{n}{\log n})^{1/k})$ for chromatic number of graphs with girth $g \geq 2k+1$, given by Denley \cite{Den94}, up to poly-logarithmic factors. Also, one can derive from the standard Moore bound argument \cite{HS60, God13} that any graph of girth $g \geq 2k+1$ is an $n^{1/k}$-degenerate graph. Hence, the result by Irani \cite{Ira94} that First-Fit only uses $O(d \log n)$ colors for $d$-degenerate graphs implies the bound of $O(n^{1/k} \log n)$ colors. \autoref{prop:girth-ff} is better by a factor of $\log n$.

In contrast, for graphs with large odd girth, First-Fit may perform much worse, since it can use $\Theta(n)$ colors even for bipartite graphs (see \autoref{fig:intro-ff}). This is a fundamental difference between ``girth'' and ``odd girth'' in the context of online coloring.

\subsection{Organization}

\autoref{sec:prelim} presents the preliminaries needed for the proof. \autoref{sec:group} presents the online group coloring problem and its solution, which is a key tool in our algorithm. \autoref{sec:special} presents an algorithm that uses $O(n^{2/5})$ colors for graphs with odd girth $g \geq 29$. \autoref{sec:general} extends the idea to multiple layers and obtains our main result, \autoref{thm:generalize-1}.

\section{Preliminaries} \label{sec:prelim}

Before moving on to our algorithm, we present the notation and definitions, along with Kierstead's $O(\sqrt{n})$-color algorithm for $(C_3, C_5)$-free graphs.

\subsection{Notation and Definitions}

First, we formally define the online coloring problem.

\begin{definition}[\cite{KYY25}]
    In the online coloring problem, an undirected graph $G = (V, E)$ is initially empty; in other words, there are zero vertices. Vertices are added to $G$ one by one, together with their incident edges. Immediately after a vertex $v$ is added (along with its incident edges), we must assign a color $c(v) \in \mathbb{N}$ to $v$. This color must differ from that of all its neighbors. The next vertex is presented only after assigning a color. The goal is to minimize the number of colors used, that is, $|\{c(v) : v \in V\}|$.
\end{definition}

For notation, given a graph $G = (V, E)$, we denote the set of vertices adjacent to $v \in V$ as $N(v)$ or $N_G(v)$, and the set of vertices adjacent to at least one vertex in $S \subseteq V$ as $N(S)$ or $N_G(S)$. Note that when describing online coloring algorithms, notation such as $N(v)$ and $N(S)$ is taken from the current graph (not the final graph). We also define the \emph{odd girth} along with its useful property.

\begin{definition}
    The odd girth of a graph $G = (V, E)$ is the length of the shortest odd-length cycle in $G$, or $+\infty$ if no such cycle exists. The odd girth is also equal to the length of the shortest odd-length closed walk in $G$.
\end{definition}

In the following, we define the notion of \emph{even-diameter} in a graph, a concept that is used extensively throughout our online coloring algorithms.

\begin{definition}
    Let $G = (V, E)$ be a graph. The ``even-distance'' between two vertices $s, t \in V$ is the length of the shortest even-length walk between $s$ and $t$, or $+\infty$ if no such walk exists. The ``even-diameter'' of a vertex set $S \subseteq V$ is the maximum even-distance between any two vertices in $S$. Note that the walks are allowed to pass through vertices outside $S$.
\end{definition}

\subsection{Kierstead's Algorithm} \label{subsec:kierstead}

Next, we review Kierstead's algorithm for graphs without $3$-cycles and $5$-cycles. First, it is a standard assumption in online coloring that the final number of vertices $n$ is known in advance \cite{Kie05, KYY25}. This is justified by the following lemma.

\begin{lemma}[\cite{KYY25}] \label{lem:n-is-known}
    Let $f: \mathbb{N} \to \mathbb{R}_{\geq 0}$ be a non-decreasing unbounded function. For a specific graph class (e.g., graphs with odd girth at least 7), suppose there exists a deterministic online coloring algorithm for $n$-vertex graphs that uses at most $f(n)$ colors, for every $n \in \mathbb{N}$. Then there exists a deterministic online coloring algorithm for this graph class that uses at most $4f(n)$ colors, even when the final number of vertices, $n$, is unknown.
\end{lemma}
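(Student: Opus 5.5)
We use the standard doubling trick, with fresh palettes for successive guesses of $n$. Let $A_m$ be the given algorithm for $m$-vertex graphs, which uses at most $f(m)$ colors. We run $A_{m_1}$ while at most $m_1$ vertices have arrived. Once more vertices arrive, we abandon it and start $A_{m_2}$ on the subgraph induced by the new vertices only, and so on. The thresholds $m_1 < m_2 < \cdots$ are chosen from the values of $f$ rather than of $n$: $m_j$ is the largest integer $m$ with $f(m) \le 2^{j} f(1)$, or some equivalent choice.

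\textbf{Phases.} Phase $j$ consists of vertices with arrival index in $(M_{j-1}, M_j]$, where $M_j = m_1 + \cdots + m_j$. Phase $j$ thus has at most $m_j$ vertices. Its induced subgraph lies in the same graph class, since the class is hereditary: large odd girth is preserved by induced subgraphs, and this is the case we need. We color it by simulating $A_{m_j}$ on that induced subgraph, ignoring edges to earlier phases, with a palette disjoint from all other phases. Properness follows because within a phase $A_{m_j}$ is proper, and across phases the palettes are disjoint. A phase holding fewer than $m_j$ vertices is just a prefix of a valid $m_j$-vertex input (pad with isolated vertices), so $A_{m_j}$ still uses at most $f(m_j)$ colors there.

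\textbf{Counting.} Suppose the final $n$ falls in phase $J$, so $n > M_{J-1} \ge m_{J-1}$. The total number of colors is at most $\sum_{j \le J} f(m_j)$. If the thresholds make $f(m_j)$ grow geometrically with ratio $2$, this sum is at most $2f(m_J)$. We then need $f(m_J) \le 2 f(n)$. Since $f$ is non-decreasing and $n > m_{J-1}$, we have $f(n) \ge f(m_{J-1}+1)$. By maximality in the definition of $m_{J-1}$, $f(m_{J-1}+1) > 2^{J-1}f(1) \ge f(m_J)/2$. Together these give a total of at most $4f(n)$.

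\textbf{Main obstacle.} The delicate step is choosing the thresholds when $f$ has jumps or plateaus. Unboundedness of $f$ guarantees that every $m_j$ is finite and that the sequence eventually increases. Skipping repeated values, and handling $f(1)=0$ by starting from the first positive value, needs care. If $f$ jumps past several powers of $2$ at once, some levels coincide and are merged. The geometric-sum bound still holds because we sum only over distinct thresholds, each of whose $f$-values is at most half the next one's. The factor $4 = 2 \times 2$ comes from the geometric sum and from overshooting $n$ by one level.
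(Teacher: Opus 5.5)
The paper gives no proof of this lemma; it is imported from \cite{KYY25}. Your construction is the standard doubling argument: fresh palettes per phase, thresholds $m_j$ chosen by doubling the \emph{value} of $f$, short phases padded with isolated vertices, and the fact that odd-girth classes are hereditary. The algorithm does achieve $4f(n)$. However, the counting step is false as you justify it. When $m_{j+1}>m_j$, maximality of $m_j$ only gives $f(m_{j+1})\ge f(m_j+1)>2^j f(1)\ge f(m_j)$. That is strict growth, not doubling. So neither $\sum_{j\le J} f(m_j)\le 2f(m_J)$ nor ``each distinct threshold's $f$-value is at most half the next one's'' holds. For example, take $f(1)=1$, $f(2)=2$, $f(3)=4$, $f(4)=5$, and $f(m)=100m$ for $m\ge 5$. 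This is a valid bound for every class, since $f(m)\ge m$. It gives $m_1,m_2,m_3=2,3,4$, all distinct, and $\sum_{j\le 3} f(m_j)=11>10=2f(m_3)$. Merging coinciding levels does not rescue the claim, since here no levels coincide.

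The repair is to never compare with $f(m_J)$. Use only the bound $f(m_j)\le 2^j f(1)$, which holds by definition even when thresholds coincide. For $J\ge 2$,
\[
\sum_{j\le J} f(m_j)\;\le\;\sum_{j\le J} 2^j f(1)\;<\;2^{J+1}f(1)\;=\;4\cdot 2^{J-1}f(1)\;<\;4f(m_{J-1}+1)\;\le\;4f(n).
\]
For $J=1$, which your argument does not cover because it refers to $m_{J-1}$, the cost is $f(m_1)\le 2f(1)\le 2f(n)$. With this, your ``main obstacle'' paragraph becomes unnecessary: coinciding thresholds just produce repeated phases, each still charged at most $2^j f(1)$. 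The case $f(1)=0$ also cannot arise, because a one-vertex graph needs a color. Hence $f(1)\ge 1$, and every $m_j$ is a finite integer $\ge 1$.
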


\begin{theorem}[\cite{Kie98}] \label{thm:kierstead}
    For graphs of odd girth at least 7, there is a deterministic online coloring algorithm that uses $O(n^{1/2})$ colors.
\end{theorem}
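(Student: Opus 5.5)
We reconstruct Kierstead's algorithm; by \autoref{lem:n-is-known} we may assume $n$ is known, and we set $d = \lceil \sqrt{n} \rceil$. The algorithm keeps two palettes of $O(\sqrt{n})$ colors. The first is a First-Fit palette of $d$ colors, and each arriving vertex $v$ gets its least available color there if one exists. Otherwise $v$ has neighbors of all $d$ First-Fit colors, so $\deg(v) \ge d$ in the current graph. For such $v$ we pick a witness set $W(v) \subseteq N(v)$ of exactly $d$ neighbors, one per First-Fit color. We then color $v$ by grouping these "high" vertices into classes.

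\textbf{Grouping.} Each class is represented by an earlier high vertex $r$ together with its witness set $W(r)$. When $v$ arrives, we scan the classes in order. We put $v$ into the first class $r$ such that $N(v) \cap N(W(r)) \neq \emptyset$, i.e., $v$ is within distance $\le 2$ of some vertex of $W(r)$, so $v$ and $r$ are at even-distance at most $4$. If no class qualifies, $v$ opens a new class with $r = v$.

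Inside a class, every member $u$ has a walk $u \to x \to w \to r$ with $w \in W(r)$, of length $3$. We color class $r$ with two colors, chosen by the parity of this walk.
\begin{itemize}
\item If $u$ and $u'$ in the same class were adjacent, concatenating their walks through $r$ with the edge $uu'$ gives a closed walk of length $3+3+1 = 7$. A closed odd walk of length $7$ contains an odd cycle of length $\le 7$.
\item For a genuinely proper coloring we assign the class color pair $\{2r', 2r'+1\}$ and give every member of class $r$ the same color. This works because a closed walk of length $7$ already forbids adjacency when the odd girth is at least $9$.
\end{itemize}
This exposes a mismatch: with odd girth exactly $7$ the length-$7$ walk is not excluded. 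I would therefore tighten membership to even-distance $2$ via the witnesses, requiring $v$ to be adjacent to a vertex of $N(W(r))$ at controlled parity, and rely on 2-coloring each class by parity. Members are at odd distance $1$ or odd distance $3$ from $W(r)$. A same-parity adjacent pair closes a walk of length $\le 3+3+1$ through $w$ and $w'$ joined via $r$, which is at most $2+2+1+2 = 7$ if we route through $r$ and measure from $W(r)$. I expect this bookkeeping to be the main obstacle: choosing the membership rule so that the closed walks have length $\le 5$ and $C_3, C_5$-freeness applies.

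\textbf{Counting classes.} Let $r_1, r_2, \dots$ be the representatives. If $r_j$ opened a new class after $r_i$, then $r_j$ had no neighbor in $N(W(r_i))$. Since $W(r_j) \subseteq N(r_j)$, we get $W(r_j) \cap N(W(r_i)) = \emptyset$, and symmetrically $W(r_i) \cap N(W(r_j)) = \emptyset$. In particular the witness sets are pairwise disjoint; they can only meet via adjacency to $r_j$, and triangle-freeness handles that case. Each witness set has size $d$, so there are at most $n/d \le \sqrt{n}$ classes. With $2$ colors per class plus the $d$ First-Fit colors, the total is $O(\sqrt{n})$.
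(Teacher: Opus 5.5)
There is a genuine gap, and it is the one you flagged yourself. Your membership rule is one step too loose, and the parity patch does not repair it. Under the rule $N(v)\cap N(W(r))\neq\emptyset$, every member $u$ of class $r$ reaches $r$ only by a walk $u\to x\to w\to r$ of length $3$. So two adjacent members produce a closed odd walk of length $7$, for instance the $7$-cycle $r,w,x,u,u',x',w'$, and odd girth $\ge 7$ does not forbid that. Splitting a class by walk parity cannot help, because all these walks have the same length.

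Your counting step fails too. $W(r_j)\cap N(W(r_i))=\emptyset$ does not imply $W(r_j)\cap W(r_i)=\emptyset$. In a triangle-free graph $W(r_i)\subseteq N(r_i)$ is independent, so $W(r_i)\cap N(W(r_i))=\emptyset$ always holds, and your rule never captures a vertex merely for being adjacent to $W(r_i)$. Concretely:
\begin{itemize}
\item Present $a_1,b_1,\dots,a_d,b_d$ with $a_i\sim b_j$ iff $i\neq j$, so First-Fit gives $a_k$ color $k$.
\item Then present $r_1,\dots,r_m$ with $m=n-2d$, each adjacent exactly to $a_1,\dots,a_d$.
\end{itemize}
This graph is bipartite. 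Each $r_j$ is forced out of First-Fit with $W(r_j)=\{a_1,\dots,a_d\}$. Moreover $N(r_j)\cap N(W(r_i))=\{a_1,\dots,a_d\}\cap\{b_1,\dots,b_d,r_1,\dots,r_{j-1}\}=\emptyset$, so every $r_j$ opens its own class and your algorithm uses $\Theta(n)$ colors.

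The missing idea is to define membership at distance one from the witness set, not two. Put $v$ into the first class $r$ with $N(v)\cap W(r)\neq\emptyset$, i.e., $v$ and $r$ share a neighbor in $W(r)$. The paper does exactly this, using the whole neighborhood $X_i=N(w_i)$ of the representative, taken at creation time, in place of $W(r)$.

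With this rule the coloring works with one color per class. If members $u\sim w$ and $u'\sim w'$ with $w,w'\in W(r)$ were adjacent, the closed walk $u\to w\to r\to w'\to u'\to u$ has length $5$, which odd girth $\ge 7$ excludes. No parity bookkeeping is needed, and the representative itself is safe by triangle-freeness.

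The counting also goes through directly. A vertex $r_j$ that opens a new class has no neighbor in any earlier $W(r_i)$, so $W(r_j)\subseteq N(r_j)$ is disjoint from every earlier witness set. The witness sets are therefore pairwise disjoint of size $d$, which gives at most $n/d\le\sqrt n$ classes and $O(\sqrt n)$ colors in total.
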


\begin{proof}
    The idea of his algorithm is to divide vertices into ``good'' sets $X_1, X_2, \dots \ (\subseteq V)$, where for each $i$, the vertices of $X_i$ are adjacent to a common vertex $w_i \in V$. The convenience of these sets is that a group of vertices adjacent to at least one vertex in $X_i$, say $Y_i \ (\subseteq V)$, can be colored with the same color (we color them with color $\lceil n^{1/2} \rceil + i$). Indeed, if two such vertices $y_1, y_2 \in Y_i$ are adjacent, this forms a closed walk of length $5$: $y_1 \to x_1 \to w_i \to x_2 \to y_2 \to y_1$, where $x_j$ is a vertex in $X_i$ adjacent to $y_j$ (for $j = 1, 2$). This contradicts the assumption of odd girth $\geq 7$.

    Next, we describe the algorithm. Let $X_1, \dots, X_r$ be the current list of these ``good'' sets. Initially, $r = 0$. When a new vertex $v \in V$ arrives, we perform the following procedure:
    
    \begin{enumerate}
        \item If $v$ can be colored by any of colors $1, \dots, \lceil n^{1/2} \rceil$, color $v$ using First-Fit (\autoref{fig:kierstead}a).
        \item Otherwise, if there exists some $i$ such that $X_i \cap N(v) \neq \emptyset$, we pick one such $i$ and color $v$ with color $\lceil n^{1/2} \rceil + i$ (\autoref{fig:kierstead}b). This also means that $v$ is added to group $Y_i$ (i.e., $Y_i \gets Y_i \cup \{v\}$).
        \item Otherwise, we create a new ``good'' set $X_{r+1} := N(w_{r+1})$ with $w_{r+1} := v$ and then we color $v$ with color $\lceil n^{1/2} \rceil + (r+1)$  (\autoref{fig:kierstead}c). This also means that $v$ is added to group $Y_{r+1}$ (i.e., $Y_{r+1} \gets Y_{r+1} \cup \{v\}$). This increases $r$ by $1$.
    \end{enumerate}
    
    For each $i$, we have $|X_i| \geq \lceil n^{1/2} \rceil$ since $w_i$ is not colored in Step 1 (First-Fit). Also, $X_j \cap X_i = \emptyset$ for each $j = 1, \dots, i-1$ since $w_i$ is not colored in Step 2. Therefore, $r \leq n^{1/2}$; we use only $\lceil n^{1/2} \rceil + r = O(n^{1/2})$ colors in total.
\end{proof}

\begin{figure}[t]
  \centering
  \includegraphics[width=\linewidth]{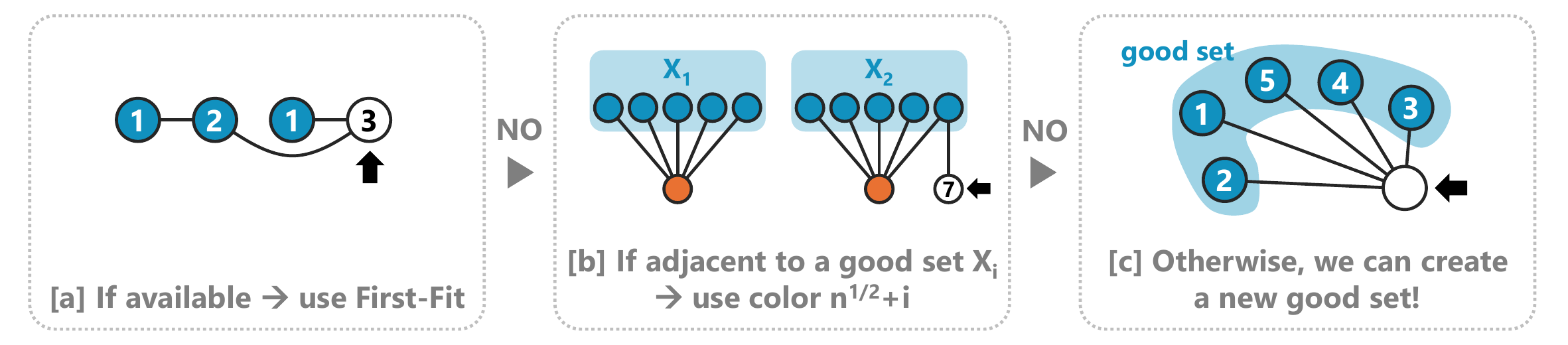}
  \caption{A sketch of Kierstead's algorithm when $\lceil n^{1/2} \rceil = 5$. The white vertex with an arrow is a newly arrived vertex $v$ that is to be colored. The numbers inside the vertices are the assigned colors.}
  \label{fig:kierstead}
\end{figure}

\section{Online Group Coloring} \label{sec:group}

In this section, we define a novel problem, \emph{online group coloring}, which is a key tool in our algorithm. This is a variant of online coloring where vertices are divided into groups.

\begin{definition}
    The online group coloring problem is a special case of online coloring where each vertex $v$ belongs to a group $g(v) \in \mathbb{N}$. The group to which $v$ belongs is revealed at the time of its arrival. As in online coloring, we need to color $v$ immediately. It is guaranteed that there are no edges between vertices in the same group.

    This problem specifies a parameter $\Delta \geq 0$. The ``neighbor graph'' $H$ is defined as follows: each group forms a vertex, and there is an edge between two groups if and only if at least one edge exists between them in $G$. When a vertex $v$ arrives, the degree of $g(v)$ in $H$ is guaranteed to be at most $\Delta$ (it is allowed for the degrees of other vertices to exceed $\Delta$ as a result).
\end{definition}

\begin{theorem}\label{thm:group-coloring-1}
    There exists a deterministic algorithm that uses at most $\Delta^2 + 2$ colors for online group coloring.
\end{theorem}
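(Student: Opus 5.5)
I would prove \autoref{thm:group-coloring-1} with an explicit First-Fit-style rule in which a group tries to \emph{reuse} its own colors before opening a new one. The idea is to treat each group as a single ``super-vertex'' of the neighbor graph $H$. The rule: when $v$ arrives with group $g(v)$, let $F(v)$ be the set of colors currently carried by some vertex in a group adjacent to $g(v)$ in $H$. Any edge of $v$ goes to such a group, because there are no edges inside a group. So coloring $v$ with any color outside $F(v)$ is always proper. The algorithm reuses a color already present in $g(v)$ that lies outside $F(v)$, if one exists. Otherwise it takes the least color outside $F(v)$. Properness is immediate, so all the work is in bounding the number of colors.

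The counting should go through an invariant stating that each group carries only a bounded number of colors. My first target is: every group carries at most $\Delta+1$ colors, and more precisely at most one more than the number of its neighboring groups at the moment its latest color was opened. I would argue this by charging each newly opened color of a group $X$ to a distinct neighboring group. When $X$ is forced to open a new color, every color already in $X$ is blocked. So each such color appears in some neighbor of $X$, and I would try to match these old colors injectively to neighbors of $X$. With at most $\Delta$ neighbors at arrival time, this gives at most $\Delta+1$ colors per group.

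Granting the invariant, the total bound is short. At the arrival of $v$, $g(v)$ has at most $\Delta$ neighbors, each carrying at most $\Delta+1$ colors. If I can sharpen the count so that each neighbor contributes at most $\Delta$ colors that actually block, for example because one of its colors is shared with $g(v)$ or by using the guarantee only at the time those colors were opened, then $|F(v)|\le \Delta^2$. The least color outside $F(v)$ then lies in $\{1,\dots,\Delta^2+1\}$. The extra $+1$ in $\Delta^2+2$ is slack that I expect to need for the degenerate case $\Delta=0$, or for the first color of a group.

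The main obstacle is the invariant. The degree guarantee on $H$ holds only at the moment a vertex of that group arrives, and a group's degree may later exceed $\Delta$. So when a neighbor $Y$ of $g(v)$ is counted, $Y$ may have more than $\Delta$ neighbors by then. The bound on the colors of $Y$ must therefore be derived from the state at $Y$'s own last arrival, not the current state. If the injective charging fails, my fallback is to modify the rule so that each group keeps a single designated \emph{current} color and switches only when that color becomes blocked. I would then bound the number of switches via a potential argument over the at most $\Delta$ neighbors present at each arrival.
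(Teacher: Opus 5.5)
Your rule is proper and does in fact achieve $\Delta^2+2$. However, the proposal leaves both nontrivial steps unproved, and the sharpening you suggest for the second step is false.

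\textbf{The per-group invariant.} Knowing that every old color of $X$ is carried by some neighbor gives no injectivity, because one neighbor can carry several colors of $X$. For example, two groups may each acquire $\{1,2\}$ via other neighbors and only then become adjacent. The missing idea is temporal. Once $Y$ is adjacent to $X$, every color carried by $X$ lies in $F(w)$ for every later vertex $w$ of $Y$, so $Y$ can never afterwards acquire a color of $X$. Now suppose $X$ opens $c_i$ at time $T_i$, so no neighbor of $X$ at $T_i$ carries $c_i$, and opens its next color at $T_{i+1}$. Then the group blocking $c_i$ at $T_{i+1}$ was not adjacent to $X$ at $T_i$, so the neighborhood of $X$ strictly grows between consecutive openings. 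Charging $c_i$ to that blocker is the injective map you want. Since the last opening happens at an arrival in $X$, where the degree is at most $\Delta$, this yields your invariant.

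\textbf{The final count.} Even granting the invariant, you only get $|F(v)|\le\Delta(\Delta+1)$, i.e.\ $\Delta^2+\Delta+1$ colors. Your claim $|F(v)|\le\Delta^2$ already fails for $\Delta=1$:
\begin{itemize}
\item group $W$ receives one vertex, which gets color $1$;
\item group $Y$ receives an isolated vertex (color $1$), then a vertex adjacent to the vertex of $W$ (color $2$);
\item $v$, the first vertex of a new group, is adjacent to a vertex of $Y$.
\end{itemize}
Then $F(v)=\{1,2\}$ and $v$ receives color $3=\Delta^2+2$, so the $+2$ is not slack.

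The correct count isolates color $1$. If a group's first vertex has no neighbor, it receives color $1$. Otherwise its first color already uses up one neighbor. In both cases each group carries at most $\Delta$ colors other than $1$, so $|F(v)|\le 1+\Delta^2$.

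This is exactly the paper's argument, run on your fallback algorithm. Each group keeps one current color $c_i$ and a list $L_i$ initialized to $\{1\}$. It switches to the least color outside $\bigcup_{j\in N_H(i)}L_j$ only when $c_i$ lies in that union. Every switch requires a group that became adjacent since the previous vertex of group $i$, because an already-adjacent group cannot have added $c_i\in L_i$ to its list. Hence $|L_i\setminus\{1\}|\le\Delta$, and every new color is at most $\Delta^2+2$.
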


\begin{proof}
    The strategy is to try to keep using the same color inside each group, but if this color is already used in adjacent groups (in the neighbor graph $H$), we switch to a new color. For each group $i$, let $c_i$ be the current color for this group, and let $L_i$ be the set of colors ever used for this group. For convenience, we initially set $c_i = 1$ and $L_i = \{1\}$.

    When a vertex $v \in V$ arrives, we normally color $v$ with color $c_{g(v)}$. However, this color may already be used in adjacent groups, i.e., when $c_{g(v)} \in \bigcup_{i \in N_H(g(v))} L_i$. In this case, we update $c_{g(v)}$ to the smallest color not in $\bigcup_{i \in N_H(g(v))} L_i$, and also add this new color to $L_{g(v)}$. It is easy to see that the same color is not used for adjacent vertices.
    
    Next, we analyze the number of colors used. First, each $L_i$ is updated only if, compared to the time when the previous vertex in group $i$ arrived, either of the following holds: (1) a group newly becomes adjacent to group $i$ (in $H$), or (2) $L_j$ for an already adjacent group $j$ (in $H$) newly starts to include $c_i$. However, case (2) means that $c_j$ was updated to the same color as $c_i$ after that time, but $c_i$ was already contained in $L_i$, so this cannot occur. Also, $L_i$ will not be updated after the final vertex in group $i$ is processed. Hence, $L_i$ is updated at most $\Delta$ times, so $|L_i \setminus \{1\}| \leq \Delta$. Thus, for each arrival of $v \in V$:
    \begin{equation*}
        c_{g(v)} \leq \left|\bigcup_{i \in N_H(g(v))} L_i\right| + 1 = \left|\bigcup_{i \in N_H(g(v))} (L_i \setminus \{1\})\right| + 2 \leq \sum_{i \in N_H(g(v))} |L_i \setminus \{1\}| + 2
    \end{equation*}
    which is at most $\Delta^2 + 2$ by the degree constraint $|N_H(g(v))| \leq \Delta$.
\end{proof}

\section{The Algorithm with $O(n^{2/5})$ colors} \label{sec:special}

In this section, we present an algorithm that uses $O(n^{2/5})$ colors for graphs with odd girth $\geq 29$. This improves the previous $O(n^{1/2})$-color bound due to Kierstead \cite{Kie98}.

\subsection{Overview} \label{subsec:special-overview}

In \autoref{subsec:kierstead}, Kierstead's algorithm covered $V$ with ``good'' sets $X_1, \dots, X_r$, where the even-diameter of each $X_i$ is only $2$. His observation was that with odd girth $\geq 7$, the entire group $Y_i$ can be colored using the same color.

In this section, as we consider graphs with odd girth $\geq 29$, the requirements for ``good'' sets are relaxed: their even-diameters only need to be at most $24$. Hence, our goal is to cover $V$ with $O(n^{2/5})$ large sets of even-diameter $\leq 24$. Below, we sketch the intuition for our algorithm to achieve this goal. 

\begin{itemize}
    \item We attempt to directly improve Kierstead's algorithm to $O(n^{2/5})$ colors. Since only $n^{2/5}$ colors can be used for First-Fit, $r = O(n^{3/5})$ sets $X_1, \dots, X_r$ of even-diameter $2$ are created, each having size $\Omega(n^{2/5})$ (\autoref{fig:ours-sketch}a). However, since we need to color $Y_1, \dots, Y_r$ within $O(n^{2/5})$ colors, we must use the same color across multiple groups $Y_i$.
    \item The key observation is that when a group $Y_i$ is adjacent to at least $\Omega(n^{1/5})$ other groups $Y_j$, we can ``merge'' these neighboring even-diameter-$2$ sets $X_j$ to form a larger set of even-diameter $\leq 24$, of size $\Omega(n^{3/5})$ (\autoref{fig:ours-sketch}b). In this way, we can cover $V$ with $O(n^{2/5})$ sets of even-diameter $\leq 24$.
    \item On the other hand, when each group has only $O(n^{1/5})$ adjacent groups, the coloring procedure reduces to the \emph{online group coloring} problem with parameter $\Delta := \Theta(n^{1/5})$. This enables us to use the same color across different groups  (\autoref{fig:ours-sketch}c).
\end{itemize}

\begin{figure}[t]
  \centering
  \includegraphics[width=\linewidth]{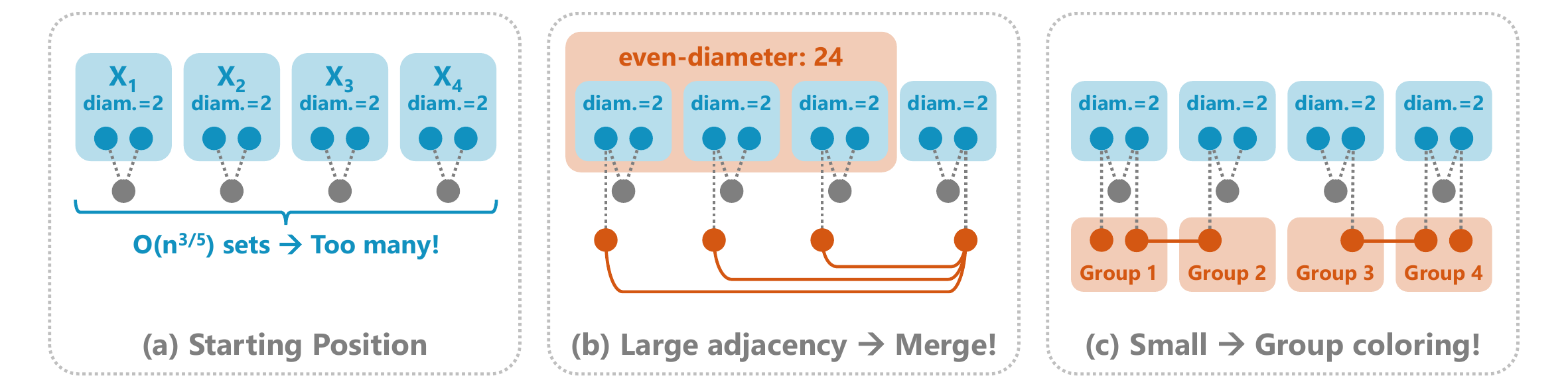}
  \caption{A sketch of our method. (a) In Kierstead's algorithm where we use $\lceil n^{2/5} \rceil$ colors for  First-Fit, we need to handle $O(n^{3/5})$ sets. (b) When a group is adjacent to many groups, we merge these groups to form a large set with even-diameter $\leq 24$. (c) When a group is adjacent to few groups, we use the procedure of online group coloring.}
  \label{fig:ours-sketch}
\end{figure}

\subsection{Formalization}

As a foundation for building our algorithm, we formalize the procedure we must perform for coloring non-First-Fit vertices in Kierstead's algorithm in a generalized manner. We define the subproblem ``$(r^*, d^*)$-subroutine'' as follows. Note that in Kierstead's algorithm, the procedure for coloring non-First-Fit vertices is equivalent to the $(n^{1/2}, 2)$-subroutine.

\begin{definition}
    We consider the $(r^*, d^*)$-subroutine, where $r^* \in \mathbb{R}_{\geq 0}$ and $d^* \in \{2, 4, 6, \dots\}$ are parameters. In this subroutine, we maintain sets $X_1, \dots, X_r \subseteq V$ called ``bases'' and sets $Y_1, \dots, Y_r \subseteq V$ called ``groups''. Initially, there are no such sets, i.e., $r = 0$. The following types of events may occur in an arbitrary order (see also \autoref{fig:subroutine}):

    \begin{description}
        \item[\textbf{Base addition.}] A new base $X_{r+1} \subseteq V$ is added. It is guaranteed that the even-diameter of $X_{r+1}$ is at most $d^*$ (in the graph $G$ for online coloring). The corresponding group $Y_{r+1}$ is initially empty. This increases $r$ by $1$.
        \item[\textbf{Coloring query.}] A vertex $v \in V$ is added to a group $Y_i$. It is guaranteed that $v$ is adjacent to at least one vertex in $X_i$. Then, we need to color $v$ so that its color differs from the colors of all adjacent vertices in $Y_1 \cup \dots \cup Y_r$. (We ignore the coloring on $V \setminus (Y_1 \cup \cdots \cup Y_r)$ because we assume that their colors are disjoint from the colors in $Y_1 \cup \cdots \cup Y_r$.)
    \end{description}

    It is guaranteed that at most $r^*$ bases are added, i.e., $r \leq r^*$. The objective is to minimize the number of colors used across all coloring queries. Note that the bases $X_1, \dots, X_r$ are not assumed to be disjoint, but the groups $Y_1, \dots, Y_r$ are clearly disjoint by definition.
\end{definition}

\begin{figure}[t]
  \centering
  \includegraphics[width=\linewidth]{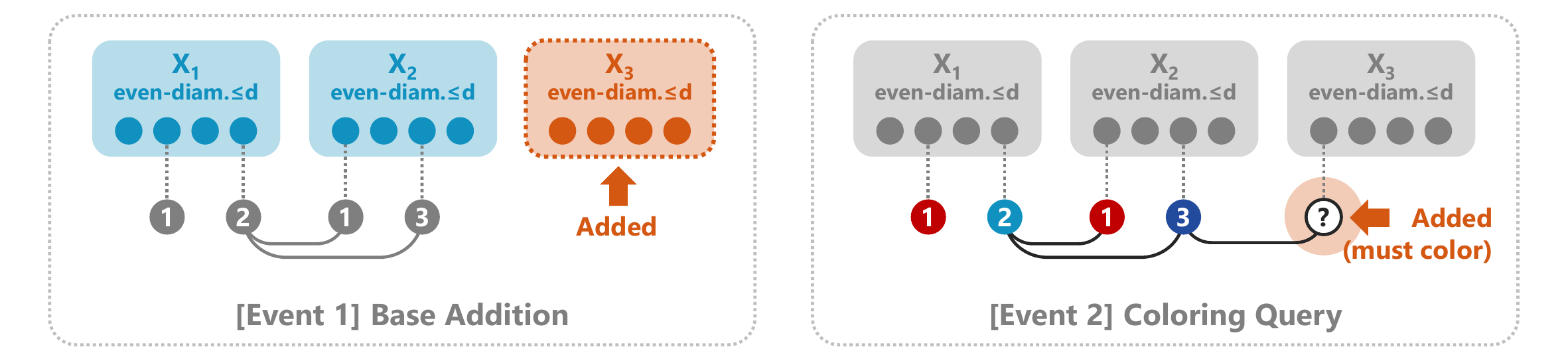}
  \caption{Two types of events in the $(r^*, d^*)$-subroutine. The number of bases added is guaranteed to be at most $r^*$. Note that the vertices in $X_i$ can be regarded as ``uncolored'' because we assume that their colors are disjoint from the colors we use in $Y_1 \cup \cdots \cup Y_r$.}
  \label{fig:subroutine}
\end{figure}

\subsection{Dividing the Problem}

We solve the online coloring problem for graphs with odd girth $\geq 29$ using the following strategy. Part A and Part C can be solved similarly to Kierstead's algorithm. The key difficulty lies in Part B, where we use online group coloring and merge bases of even-diameter $2$ to obtain a large set with even-diameter $\leq 24$.

\begin{description}
    \item[\textbf{Part A.}] Reduce online coloring to the $(n^{3/5}, 2)$-subroutine using $O(n^{2/5})$ extra colors.
    \item[\textbf{Part B.}] Reduce the $(n^{3/5}, 2)$-subroutine to the $(n^{2/5}, 24)$-subroutine using $O(n^{2/5})$ extra colors.
    \item[\textbf{Part C.}] Solve the $(n^{2/5}, 24)$-subroutine using $O(n^{2/5})$ colors.
\end{description}

First, we solve Part A. Intuitively, if we only use $\lceil n^{2/5} \rceil$ colors for First-Fit in Kierstead's algorithm, the problem of coloring the remaining vertices falls to the $(n^{3/5}, 2)$-subroutine. Below, we formally prove this fact.

\begin{lemma} \label{lem:special-base-case}
    For graphs with odd girth at least 29, if there exists a deterministic algorithm $\mathcal{A}$ that solves the $(n^{3/5}, 2)$-subroutine with $O(n^{2/5})$ colors, then there exists a deterministic online coloring algorithm with $O(n^{2/5})$ colors.
\end{lemma}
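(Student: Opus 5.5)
We follow Kierstead's algorithm (\autoref{thm:kierstead}) with the First-Fit palette shrunk to $d := \lceil n^{2/5} \rceil$ colors, and we hand every non-First-Fit vertex to $\mathcal{A}$ as a coloring query. By \autoref{lem:n-is-known} we may assume $n$ is known. We reserve colors $1,\dots,d$ for First-Fit and use a disjoint palette, shifted by $d$, for the colors output by $\mathcal{A}$. When a vertex $v$ arrives, we proceed as follows.
\begin{enumerate}
    \item If some color in $\{1,\dots,d\}$ is not used on $N(v)$, we color $v$ by First-Fit.
    \item Otherwise, if $N(v) \cap X_i \neq \emptyset$ for some current base $X_i$, we issue a coloring query adding $v$ to $Y_i$, and color $v$ with $d$ plus the color returned by $\mathcal{A}$.
    \item Otherwise, we set $w_{r+1} := v$, issue a base addition with $X_{r+1} := N(v)$ (the current neighborhood), then issue a coloring query adding $v$ to $Y_{r+1}$, and color $v$ accordingly.
\end{enumerate}

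First we check that every call to $\mathcal{A}$ satisfies its guarantees. The even-diameter of $X_{r+1} = N(w_{r+1})$ is at most $2$, since any $x, x' \in X_{r+1}$ are joined by the walk $x \to w_{r+1} \to x'$. In Step 3, $v$ is adjacent to every vertex of $X_{r+1}$, and $X_{r+1}$ is nonempty because $v$ has neighbors of all $d$ First-Fit colors. In Step 2, adjacency of $v$ to $X_i$ holds by the choice of $i$. The coloring is proper for three reasons: First-Fit vertices avoid each other by construction; $\mathcal{A}$ avoids conflicts among the vertices of $Y_1 \cup \cdots \cup Y_r$ (the non-First-Fit vertices); and the two palettes are disjoint, so there are no conflicts between the two classes.

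Next we bound the number of bases, which is the key step; it mirrors Kierstead's counting argument. Each $w_i$ reached Step 3, so Step 1 failed for it, and hence $|X_i| \geq d$. Step 2 also failed for $w_i$, so $w_i$ has no neighbor in $X_j$ for any $j < i$, which gives $X_i \cap X_j = \emptyset$. Since the sets $X_1,\dots,X_r$ are pairwise disjoint subsets of $V$, we get $r\,d \leq n$, and therefore $r \leq n^{3/5}$. Hence the sequence of events forms a valid instance of the $(n^{3/5}, 2)$-subroutine.

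The total number of colors is then $d$ plus the number used by $\mathcal{A}$, which is $O(n^{2/5}) + O(n^{2/5}) = O(n^{2/5})$. The only subtle point is timing. $X_{r+1}$ is fixed as the neighborhood of $w_{r+1}$ at its arrival time, and its even-diameter bound remains valid as the graph grows, since walks persist. Note that the odd-girth hypothesis is not needed in this reduction at all; it is needed only inside $\mathcal{A}$.
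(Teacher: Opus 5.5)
Your proposal is correct and follows the paper's proof essentially step by step. The paper likewise runs Kierstead's algorithm with a First-Fit palette of $\lceil n^{2/5}\rceil$ colors, feeds the bases $N(v)$ (even-diameter at most $2$) to $\mathcal{A}$ on a disjoint palette, and bounds the number of bases by $n/\lceil n^{2/5}\rceil \le n^{3/5}$ using the same size-plus-disjointness argument; your extra checks (nonemptiness of the new base, persistence of walks as the graph grows, and that the odd-girth hypothesis is used only inside $\mathcal{A}$) are accurate details the paper leaves implicit.
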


\begin{proof}
    We use an algorithm very similar to \autoref{thm:kierstead}. To solve online coloring, we use the $(n^{3/5}, 2)$-subroutine, denoted by $\mathcal{S}_0$. We use disjoint color palettes for $\mathcal{S}_0$ and for First-Fit (i.e., we use colors $\lceil n^{2/5} \rceil + 1$ and above in $\mathcal{S}_0$). After that, whenever a vertex $v \in V$ arrives, we perform the following procedure:

    \begin{enumerate}
        \item If $v$ can be colored by any of colors $1, \dots, \lceil n^{2/5} \rceil$, we color $v$ using First-Fit.
        \item Otherwise, if $v$ is adjacent to any existing base in $\mathcal{S}_0$, we color $v$ inside $\mathcal{S}_0$. Formally, we pick one such base and add $v$ to the corresponding group for this base in $\mathcal{S}_0$ (as a coloring query). This means that we use the algorithm $\mathcal{A}$ to color it.
        \item Otherwise, we add $N(v)$ as a base for $\mathcal{S}_0$. After that, we can color $v$ by going back to Step 2, because $v$ is adjacent to a new base $N(v)$.
    \end{enumerate}

    The pseudocode for this algorithm is given in \autoref{alg:reduction}. The input to the $(n^{3/5}, 2)$-subroutine is valid, because (1) each base added to $\mathcal{S}_0$ clearly has even-diameter $\leq 2$, and (2) the number of bases created is at most $\frac{n}{\lceil n^{2/5} \rceil} \leq n^{3/5}$, shown by the same argument as \autoref{thm:kierstead}. For number of colors used, we use $\lceil n^{2/5} \rceil$ colors for First-Fit, and another $O(n^{2/5})$ colors in $\mathcal{S}_0$ (using the algorithm $\mathcal{A}$). Thus, we use $O(n^{2/5})$ colors in total.
\end{proof}

\begin{algorithm}[t]
    \caption{A reduction from online coloring to the $(n^{3/5}, 2)$-subroutine, using $O(n^{2/5})$ colors.}
    \label{alg:reduction}
    \begin{algorithmic}[1]
        \State $\mathcal{S}_0 \gets$ the $(n^{3/5}, 2)$-subroutine
        \State $V_{\mathrm{FF}} \gets \emptyset$
        \For {each arrival of $v \in V$}
            \If {$v$ can be colored by any of colors $1, \dots, \lceil n^{2/5} \rceil$}
                \State $V_{\mathrm{FF}} \gets V_{\mathrm{FF}} \cup \{v\}$, and color $v$ with First-Fit
            \ElsIf {$v$ is adjacent to any existing base in $\mathcal{S}_0$}
                \State color $v$ inside $\mathcal{S}_0$ (by adding $v$ to $\mathcal{S}_0$ as a coloring query)
            \Else
                \State add $N(v)$ as a base for $\mathcal{S}_0$
                \State go to lines 6--7 to color $v$ (because $v$ neighbors a new base $N(v)$)
            \EndIf
        \EndFor
    \end{algorithmic}
\end{algorithm}

Next, we solve Part C. Intuitively, when the odd girth is $\geq 29$, we can use the same color for each group as in Step 2 of Kierstead's algorithm. Below, we formally prove this fact.

\begin{lemma} \label{lem:special-part-b}
    For graphs with odd girth at least 29, there exists a deterministic algorithm that solves the $(n^{2/5}, 24)$-subroutine with $n^{2/5}$ colors.
\end{lemma}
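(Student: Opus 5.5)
The plan is to give each group its own color, exactly as in Step 2 of Kierstead's algorithm. Concretely, on a coloring query that adds $v$ to $Y_i$, color $v$ with color $i$. Since at most $n^{2/5}$ bases are ever added, $i \le r \le r^* = n^{2/5}$, so at most $n^{2/5}$ colors are used. Groups are disjoint, so each vertex receives exactly one color. Vertices in different groups get different colors, so the only thing to verify is that no two adjacent vertices lie in the same group $Y_i$.

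The key step is an odd closed walk argument. Suppose $y_1, y_2 \in Y_i$ are adjacent. By the guarantee of coloring queries, there are $x_1, x_2 \in X_i$ with $y_1x_1, y_2x_2 \in E$; possibly $x_1 = x_2$. Because the even-diameter of $X_i$ is at most $24$, there is an even-length walk $W$ from $x_1$ to $x_2$ of length at most $24$ (length $0$ is allowed if $x_1 = x_2$). Then
\[
y_1 \to x_1 \xrightarrow{W} x_2 \to y_2 \to y_1
\]
is a closed walk of length $|W| + 3$. This length is odd and at most $27$.

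An odd closed walk contains an odd cycle of at most its length, which is exactly the equivalence noted in the definition of odd girth. So $G$ would contain an odd cycle of length at most $27 < 29$, contradicting the odd girth assumption. Hence the coloring is proper on $Y_1 \cup \dots \cup Y_r$.

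The main point needing care is that the relevant graph is the current graph at query time. The walk $W$ guaranteed at base addition stays present afterward, since edges are never removed. Any odd closed walk found at any time therefore persists in the final graph, whose odd girth is at least $29$. This makes the argument valid online. I expect no serious obstacle beyond this observation.
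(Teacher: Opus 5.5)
Your proposal is correct and follows essentially the same argument as the paper: color each vertex of $Y_i$ with color $i$, and rule out an edge inside $Y_i$ by building the odd closed walk $y_1 \to x_1 \to \cdots \to x_2 \to y_2 \to y_1$ of length at most $27$, which contradicts odd girth at least $29$. Your added remarks about the case $x_1 = x_2$ and about walks persisting as the graph grows online are valid refinements that the paper leaves implicit.
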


\begin{proof}
    Given that each base $X_i$ has even-diameter $\leq 24$, no two vertices in group $Y_i$ are adjacent. Suppose that $y_1, y_2 \in Y_i$ are adjacent. Let $x_j$ be a vertex in $X_i$ that is adjacent to $y_j$ (for $j = 1, 2$). Then there exists a closed walk $y_1 \to x_1 \to \cdots \to x_2 \to y_2 \to y_1$, where $x_1 \to \cdots \to x_2$ is an even-length walk of length at most $d^* = 24$. This is an odd closed walk of length at most $d^* + 3 = 27$, which contradicts the assumption of odd girth $\geq 29$.
    
    Therefore, when a vertex $v$ is put into group $Y_i$ in a coloring query, we can color $v$ with color $i$. We use at most $r^* = n^{2/5}$ colors in total.
\end{proof}

\subsection{The Main Algorithm}

It remains to solve Part B, that is, to solve the $(n^{3/5}, 2)$-subroutine. We solve this part using the $(n^{2/5}, 24)$-subroutine, which is denoted as $\mathcal{S}_1$, and the online group coloring problem where ``group $i$'' is used for a subset of vertices in $Y_i$. Note that we use disjoint color palettes for $\mathcal{S}_1$ and for online group coloring. To outline the strategy, when a vertex $v$ arrives in a coloring query, we color it in the following way:

\begin{description}
    \item[\textbf{Case 1.}] We color $v$ inside the $(n^{2/5}, 24)$-subroutine $\mathcal{S}_1$ whenever it is possible.
    \item[\textbf{Case 2.}] Otherwise, we color $v$ in either of the following ways: (a) by putting it into online group coloring, or (b) by creating new bases for the $(n^{2/5}, 24)$-subroutine.
\end{description}

We maintain, for each group $Y_i$, the set of vertices $Y'_i \subseteq Y_i$ that proceed to ``Case 2'' (which is Step 2 in the following algorithm). We also maintain a graph $H^+ = (\{1, \dots, r\}, E(H^+))$ which represents the current adjacency between these subsets $Y'_i$. The set of edges is $E(H^+) := \{i-j:$ there exist some edges between $Y'_i$ and $Y'_j$ in the original graph $G$$\}$. It is clear that the neighbor graph for online group coloring is a subgraph of $H^+$.

Now, we describe the algorithm. The parameter for online group coloring is set to $\Delta := 6n^{1/5}$. For each coloring query, supposing that vertex $v$ arrives in group $Y_i$, we perform the following procedure  (see also \autoref{fig:main-algorithm}):

\begin{figure}[t]
  \centering
  \includegraphics[width=\linewidth]{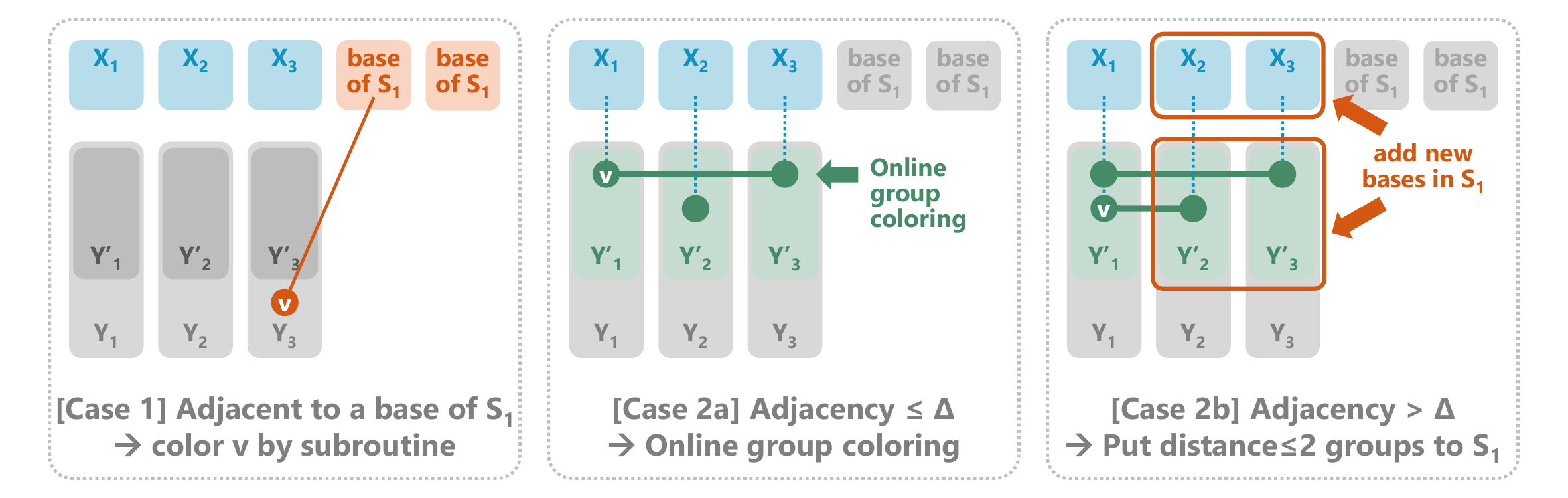}
  \caption{A sketch of the solution to Part B and the relevant variables. (a) When $v$ is adjacent to a base in $\mathcal{S}_1$, we color $v$ in the $(n^{2/5}, 24)$-subroutine and do not add $v$ to $Y'_i$. (b) Otherwise, if $|N_{H^+}(i)| \leq \Delta$, we color $v$ using our algorithm for online group coloring. (c) Otherwise, we merge the distance-$0, 1, 2$ groups and put them into $\mathcal{S}_1$ as bases.}
  \label{fig:main-algorithm}
\end{figure}

\begin{enumerate}
    \item If $v$ is adjacent to any existing base in $\mathcal{S}_1$, we color $v$ inside $\mathcal{S}_1$. Formally, we pick one such base, and add $v$ to the corresponding group for this base in $\mathcal{S}_1$ (as a coloring query). This means that we use our algorithm for the $(n^{2/5}, 24)$-subroutine to color $v$.
    \item Otherwise, after adding $v$ to $Y'_i$ and updating the graph $H^+$, we proceed as follows:
    \begin{enumerate}
        \item If $|N_{H^+}(i)| \leq \Delta$, then we put $v$ into group $i$ for online group coloring and color $v$ using the algorithm in \autoref{thm:group-coloring-1}. This is because the degree requirement for online group coloring is satisfied as the neighbor graph is a subgraph of $H^+$.
        \item Otherwise, we cannot color $v$ with online group coloring. Instead, we exploit the large degree and create large sets with even-diameter $\leq 24$. Let $D_0, D_1, D_2$ be the sets of vertices with distance $0, 1, 2$ from vertex $i$ in $H^+$. It can be shown that for $k = 0, 1, 2$, the sets $\bigcup_{j \in D_k} X_j$ and $\bigcup_{j \in D_k} Y'_j$ indeed have even-diameter $\leq 24$ (\autoref{lem:x2-diameter}). Hence, we add these $6$ sets as bases for $\mathcal{S}_1$. After that, we can color $v$ by going back to Step 1, since $v$ is adjacent to a new base $\bigcup_{j \in D_0} X_j \ (= X_i)$.
    \end{enumerate}
\end{enumerate}

The pseudocode of this algorithm is given in \autoref{alg:main-0.4}.

\begin{algorithm}[h]
    \caption{Algorithm to solve the $(n^{3/5}, 2)$-subroutine with $O(n^{2/5})$ colors. Here, we define the graph $H^+ = (\{1, \dots, r\}, E(H^+))$ by $E(H^+) := \{i-j:$ edges exist between $Y'_i$ and $Y'_j$ in $G$$\}$.}
    \label{alg:main-0.4}
    \begin{algorithmic}[1]
        \State $\mathcal{S}_1 \gets$ the $(n^{2/5}, 24)$-subroutine
        \For {each coloring query (supposing that vertex $v$ arrives in group $Y_i$)}
            \If {$v$ is adjacent to any existing base in $\mathcal{S}_1$}
                \State color $v$ inside $\mathcal{S}_1$ (by adding $v$ to $\mathcal{S}_1$ as a coloring query)
            \Else
                \State $Y'_i \gets Y'_i \cup \{v\}$, and update the graph $H^+$
                \If {$|N_{H^+}(i)| \leq \Delta \ (= 6n^{1/5})$}
                    \State color $v$ with our algorithm for online group coloring (\autoref{thm:group-coloring-1})
                \Else
                    \State $D_k \gets$ the set of vertices with distance $k$ from vertex $i$ in $H^+$, for $k = 0, 1, 2$
                    \State add $\bigcup_{j \in D_k} X_j$ and $\bigcup_{j \in D_k} Y'_j$ as bases for $\mathcal{S}_1$, for $k = 0, 1, 2$
                    \State go to lines 3--4 to color $v$ (because $v$ neighbors a new base $\bigcup_{j \in D_0} X_j = X_i$)
                \EndIf
            \EndIf
        \EndFor
    \end{algorithmic}
\end{algorithm}

\begin{remark} \label{remark:alg-future}
    We note that, after we perform Step 2-b with respect to $D_0, D_1, D_2$, all future vertices $v$ that arrive in group $Y_i \ (i \in D_0 \cup D_1 \cup D_2)$ will be processed in Step 1 (and therefore will not proceed to Step 2, which makes $Y'_i$ ``frozen''). This is because if $i \in D_k \ (k = 0, 1, 2)$, then $v$ is adjacent to $\bigcup_{j \in D_k} X_j$, which is a base already added to $\mathcal{S}_1$.
\end{remark}


\subsection{Correctness \& Analysis}

Next, we prove the correctness of this algorithm. We show that the input to the online group coloring problem and the $(n^{2/5}, 24)$-subroutine $\mathcal{S}_1$ is valid, and in the resulting coloring, no two adjacent vertices share the same color (\autoref{lem:special-valid}). After that, we analyze the number of colors used (\autoref{lem:special-part-a}).

\begin{lemma}\label{lem:x2-diameter}
    Bases added to the $(n^{2/5}, 24)$-subroutine $\mathcal{S}_1$ have even-diameter at most 24.
\end{lemma}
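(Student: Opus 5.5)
The plan is to prove the bound by explicitly concatenating short walks. I will use only three facts available at the moment Step 2-b is executed. First, every base $X_j$ of the $(n^{3/5},2)$-subroutine has even-diameter at most $2$. Second, every vertex of $Y'_j \subseteq Y_j$ is adjacent to some vertex of $X_j$, by the guarantee on coloring queries. Third, an edge $j - j'$ of $H^+$ means there exist $y \in Y'_j$ and $y' \in Y'_{j'}$ with $yy' \in E(G)$. Walks found in the current graph persist as the graph grows, so an even-diameter bound at insertion time remains valid. Since the six added sets are $\bigcup_{j \in D_k} X_j$ and $\bigcup_{j\in D_k} Y'_j$ for $k=0,1,2$, it suffices to fix $k$, take two indices $j, j' \in D_k$, and bound the even-distance between any $x \in X_j$ and $x' \in X_{j'}$, and between any $y \in Y'_j$ and $y' \in Y'_{j'}$.

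\emph{Step 1 (crossing one $H^+$ edge).} For an edge $a - b$ of $H^+$, fix witnesses $y_a \in Y'_a$ and $y_b \in Y'_b$ with $y_a y_b \in E(G)$. Let $x_a \in X_a$ be a neighbor of $y_a$ and $x_b \in X_b$ a neighbor of $y_b$. Then $x_a \to y_a \to y_b \to x_b$ is a walk of length exactly $3$ from $X_a$ to $X_b$.

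\emph{Step 2 (chaining along a path in $H^+$).} Since $j, j' \in D_k$, there is a walk $j = j_0, j_1, \dots, j_m = j'$ in $H^+$ with $m = 2k \le 4$: go from $j$ back to $i$ and then out to $j'$, or take $m=0$ when $k=0$. I build a walk in $G$ from $x$ to $x'$ by alternating two kinds of pieces. Inside each $X_{j_t}$ I move between the required entry and exit vertices by an even walk of length at most $2$; this is possibly of length $0$, and it exists by the even-diameter bound on $X_{j_t}$. For each $H^+$ edge $j_t - j_{t+1}$ I use the length-$3$ connector from Step 1. The total length is at most $2(m+1) + 3m = 5m + 2 \le 22$. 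Its parity equals the parity of $3m$, which is even because $m = 2k$ is even. Hence the even-distance between $x$ and $x'$ is at most $22 \le 24$.

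\emph{Step 3 (the $Y'$ sets).} For $y \in Y'_j$ and $y' \in Y'_{j'}$, I prepend the edge from $y$ to a neighbor $x \in X_j$ and append the edge from a neighbor $x' \in X_{j'}$ to $y'$. This adds exactly $2$ to the length, so the walk stays even and has length at most $24$.

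The step needing the most care is the parity bookkeeping in Step 2. The walks inside a base are only known to be even, not of a fixed length, and each crossing contributes an odd length $3$. The argument therefore relies on the fact that any two indices in the same layer $D_k$ are joined by an even number of $H^+$ edges through $i$. This is exactly why the bases are formed layer by layer rather than as a single union over $D_0 \cup D_1 \cup D_2$.
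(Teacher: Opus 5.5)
Your proposal is correct and follows the paper's proof essentially step for step. Like the paper, you chain length-$3$ connectors across the $2k$ edges of an $H^+$ walk through $i$, use even walks of length at most $2$ inside each base, and obtain the bound $10k+2 \le 22$ (the paper writes this as $(d^*+3)(2k+1)-3$); you then add $2$ for the $Y'$ unions, and your explicit parity check and remark on keeping the layers $D_k$ separate are accurate additions that the paper leaves implicit.
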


\begin{proof}
    First, we show that for any $k \in \{0, 1, 2\}$, in Step 2-b of the algorithm, $\bigcup_{j \in D_k} X_j$ has even-diameter $\leq 22$. That is, for any pair of vertices $v^{(s)}, v^{(e)} \in \bigcup_{j \in D_k} X_j$, there exists an even-length walk of length at most $22$.
    
    Let $X_{g^{(s)}}$ and $X_{g^{(e)}}$ be bases to which $v^{(s)}$ and $v^{(e)}$ belong, respectively. Let $g_0 \to g_1 \to \dots \to g_{2k} \ (g_0 = g^{(s)}, g_{2k} = g^{(e)})$ be a walk of length $2k$ between $g^{(s)}$ and $g^{(e)}$ in the graph $H^+$. This walk exists by the definition of $D_k$.

    For $i = 0, \dots, 2k-1$, let $e_i - s_{i+1}$ (where $e_i \in Y'_{g_i}$ and $s_{i+1} \in Y'_{g_{i+1}}$) be an edge between $Y'_{g_i}$ and $Y'_{g_{i+1}}$. This exists by the definition of the graph $H^+$. Let $s'_i$ and $e'_i$ be vertices in $X_{g_i}$ that neighbor $s_i$ and $e_i$, respectively. Also, let $s'_0 := v^{(s)}$ and $e'_{2k} := v^{(e)}$; note that $s'_0 \in X_{g_0}$ and $e'_{2k} \in X_{g_{2k}}$. Then, we obtain the following walk (see \autoref{fig:ours-proof} for visualization):
    \begin{align*}
        (v^{(s)} =) & \ [s'_0 \to \cdots \to e'_0 \to e_0] \to [s_1 \to s'_1 \to \cdots \to e'_1 \to e_1] \to \cdots \\
        & \to [s_{2k-1} \to s'_{2k-1} \to \cdots \to e'_{2k-1} \to e_{2k-1}] \to [s_{2k} \to s'_{2k} \to \cdots \to e'_{2k}] \ (= v^{(e)})
    \end{align*}
    where each $s'_i \to \cdots \to e'_i$ is an even-length walk of length at most $d^* = 2$; this exists because each $X_{g_i}$ has even-diameter $\leq 2$. Hence, the length of this walk from $v^{(s)}$ to $v^{(e)}$ is an even number at most $(d^*+3)(2k+1) - 3 \leq 5d^*+12 = 22$, since $k \leq 2$.

    Next, we prove the claim for $\bigcup_{j \in D_k} Y'_j$. Consider any $v^{(s)}, v^{(e)} \in \bigcup_{j \in D_k} Y'_j$; since both $v^{(s)}$ and $v^{(e)}$ are adjacent to $\bigcup_{j \in D_k} X_j$, there is a walk between $v^{(s)}$ and $v^{(e)}$ whose length is $2$ longer than in the case above. Therefore, $\bigcup_{j \in D_k} Y'_j$ has even-diameter $5d^* + 14 \leq 24$.
\end{proof}

\begin{figure}[t]
  \centering
  \includegraphics[width=\linewidth]{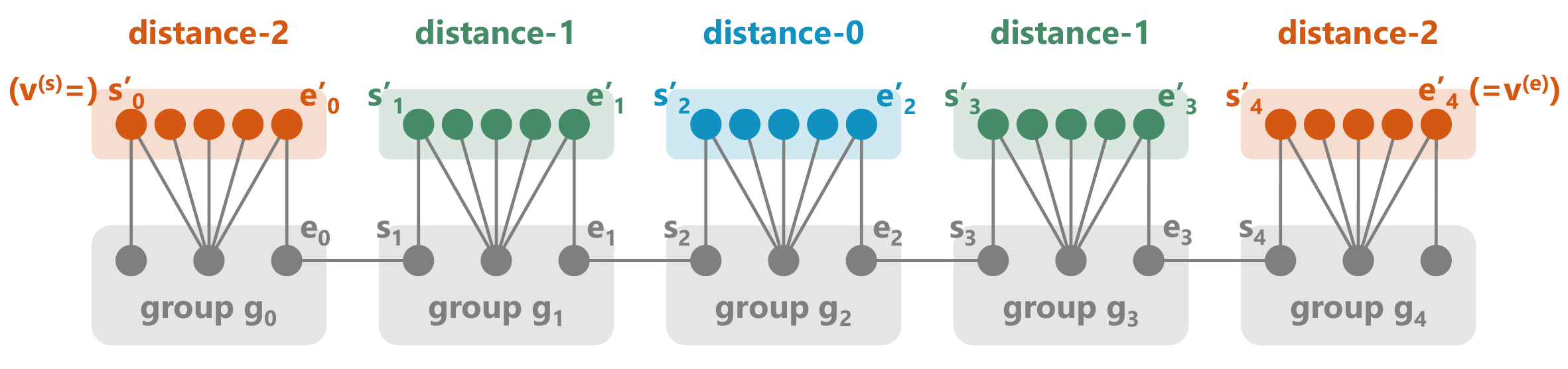}
  \caption{A sketch of the proof that $\bigcup_{j \in D_k} X_j$ has even-diameter at most 22, for $k = 2$. Indeed, the length of the walk from $v^{(s)}$ to $v^{(e)}$ is exactly 22, which is the worst case.}
  \label{fig:ours-proof}
\end{figure}

\begin{lemma} \label{lem:x2-dist3}
    Let $Z = [z_1, z_2, \dots]$ be the list, where $z_i$ is the (sole) element of $D_0$ in the $i$-th occurrence of Step 2-b. Then, any pair of elements in $Z$ are at least distance 3 apart in the final graph $H^+$.
\end{lemma}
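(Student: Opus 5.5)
Fix two occurrences of Step 2-b, the $p$-th and the $q$-th with $p<q$. Let $z_p$ and $z_q$ be the corresponding elements of $D_0$, and let $D_0^{(p)}, D_1^{(p)}, D_2^{(p)}$ be the distance classes computed at the $p$-th occurrence. The plan has two parts. First, we show that every edge of the final $H^+$ incident to a vertex of $D_0^{(p)}\cup D_1^{(p)}$ already exists at time $p$. Second, we use this to show that $z_q$ is at distance at least $3$ from $z_p$, both at time $q$ and in the final graph.

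\textbf{Freezing.} By \autoref{remark:alg-future}, after the $p$-th occurrence every later query in a group $Y_j$ with $j\in D_0^{(p)}\cup D_1^{(p)}\cup D_2^{(p)}$ goes to Step 1. Hence the sets $Y'_j$ for these $j$ never change again. Edges of $H^+$ arise only when a vertex is added to some $Y'_\ell$, and the new edge joins $\ell$ to groups $j$ whose $Y'_j$ contains a neighbor of that vertex. Suppose a new edge $\ell - j$ appears after time $p$. Then $\ell$ is the group that grew, so $\ell\notin D_0^{(p)}\cup D_1^{(p)}\cup D_2^{(p)}$. If $j$ were in $D_0^{(p)}\cup D_1^{(p)}$, then $\ell$ would be adjacent in the final graph to a vertex at distance at most $1$ from $z_p$. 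The edge is genuinely new, so this does not contradict the distance classes at time $p$ directly. We therefore argue via frozen $Y'_j$ instead.

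The new edge means that the newly added $v\in Y'_\ell$ is adjacent to some $u\in Y'_j$. For $j\in D_k^{(p)}$ with $k\le 2$, we have $u\in\bigcup_{j'\in D_k^{(p)}}Y'_{j'}$, which is a base of $\mathcal{S}_1$. So $v$ is adjacent to an existing base of $\mathcal{S}_1$, and Step 1 would have handled it. Thus $v$ could not have entered $Y'_\ell$, a contradiction. This is exactly why the $Y'$-unions were added as bases, and it is the key step.

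Consequently, after time $p$ no new $H^+$-edge touches $D_0^{(p)}\cup D_1^{(p)}\cup D_2^{(p)}$. Both endpoints of such an edge are excluded: the one that grew cannot lie in these classes, and the other cannot either by the argument just given. The same reasoning covers edges to vertices $\ell$ created later as new bases.

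\textbf{Distance.} In the final $H^+$, the ball of radius $2$ around $z_p$ coincides with the ball at time $p$. Indeed, every edge incident to $D_0^{(p)}\cup D_1^{(p)}$ is old, so the vertices at distance $1$ and $2$ are unchanged. Now suppose $z_q$ were within distance $2$ of $z_p$ in the final graph. Then $z_q\in D_0^{(p)}\cup D_1^{(p)}\cup D_2^{(p)}$. At time $q$, a vertex $v$ arrived in group $Y_{z_q}$ and reached Step 2. But \autoref{remark:alg-future} says that $v$ would have been absorbed in Step 1, a contradiction. Hence $\mathrm{dist}(z_p,z_q)\ge 3$ in the final $H^+$.

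The main obstacle is the freezing step: showing that no later edge attaches to the radius-$2$ ball. Once we observe that any such edge would make the arriving vertex adjacent to a $Y'$-base already in $\mathcal{S}_1$, the rest is bookkeeping.
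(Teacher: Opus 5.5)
Your proposal is correct and follows the paper's proof essentially step for step. You show that no $H^+$-edge touching $D_0\cup D_1\cup D_2$ can appear after the $p$-th occurrence, because the responsible vertex would already be adjacent to the base $\bigcup_{j\in D_k}Y'_j$ and go to Step 1; then you use the Remark to keep $z_q$ out of the unchanged radius-$2$ ball, and your explicit observation that these $Y'_j$ are frozen (so the neighbor $u$ lies in the snapshot base) spells out a step the paper leaves implicit.
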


\begin{proof}
    Consider $D_0, D_1, D_2$ in the $i$-th occurrence of Step 2-b (i.e., at the time of $z_i$). Then, $D_0, D_1, D_2$ are equal to the set of vertices with distance $0, 1, 2$ from $z_i$ in the \emph{final graph} $H^+$. In other words, the set of vertices with distance $0, 1, 2$ from $z_i$ remains unchanged.

    Suppose, for contradiction, that this is not the case. Then, an edge incident to $D_k \ (k = 0, 1, 2)$ is newly added to $H^+$, after the occurrence of $D_k$ (i.e., after the time of $z_i$). Let $v'$ be a vertex that caused this edge addition. Then, $v'$ proceeds to Step 2, and by \autoref{remark:alg-future}, $v'$ is not in groups of $D_0 \cup D_1 \cup D_2$. Thus, $v'$ is adjacent to some $Y'_j \ (j \in D_k)$. This implies that $v'$ is also adjacent to $\bigcup_{j \in D_k} Y'_j$, which is a base already added to $\mathcal{S}_1$. Thus, $v'$ should have been proceeded in Step 1, a contradiction.

    For each $j = i+1, i+2, \dots$, we have $z_j \notin D_0 \cup D_1 \cup D_2$ by \autoref{remark:alg-future}. Hence, with the observation above, $z_j$ is at least distance $3$ apart from $z_i$ in the final graph $H^+$ as well.
\end{proof}

\begin{lemma} \label{lem:x2-size}
    At most $n^{2/5}$ bases are added to the $(n^{2/5}, 24)$-subroutine $\mathcal{S}_1$.
\end{lemma}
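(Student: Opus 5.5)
Each occurrence of Step 2-b adds exactly $6$ bases to $\mathcal{S}_1$. It therefore suffices to show that Step 2-b occurs at most $n^{2/5}/6$ times; I will in fact aim for at most $n^{1/5}/6$ occurrences, which is much stronger than needed. The plan is to charge each occurrence $z_i$ of Step 2-b to a large, private set of base vertices, and then use the disjointness supplied by \autoref{lem:x2-dist3} to bound the number of occurrences.

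First, fix the occurrence $z_i$. At that moment $|N_{H^+}(z_i)| > \Delta = 6n^{1/5}$, and the set $D_1$ is exactly the current neighborhood of $z_i$. By \autoref{lem:x2-dist3}, the balls of radius $1$ around distinct $z_i$ are disjoint in the final graph $H^+$, because the $z_i$ are pairwise at distance at least $3$. The distance-$\le 2$ sets are frozen at the time of $z_i$, so the sets $D_1^{(i)}$ (taken at the respective times) are pairwise disjoint, and each has size more than $6n^{1/5}$.

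Next, I need disjointness at the level of vertices. The bases $X_j$ in the $(n^{3/5},2)$-subroutine are not assumed disjoint, so I cannot simply sum $|X_j|$. Instead I will use the groups $Y'_j$, which are disjoint and nonempty for every $j \in D_1^{(i)}$, since an edge in $H^+$ requires vertices in $Y'_j$. This gives only $|\bigcup_{j\in D_1^{(i)}} Y'_j| > 6n^{1/5}$ per occurrence. Summing over occurrences, with the total bounded by $n$, yields at most $n^{4/5}/6$ occurrences, which is too weak.

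So the real counting must use the size of the bases themselves. In the intended use from Part A, each base $N(w)$ has size at least $\lceil n^{2/5}\rceil$, and the bases are pairwise disjoint (Step 2 of \autoref{alg:reduction}). Even so, the subroutine definition gives no size guarantee on bases. The fallback is to bound the number of occurrences by a purely combinatorial argument on $H^+$, whose vertex set has size $r \le r^* = n^{3/5}$. The disjoint closed neighborhoods $\{z_i\}\cup D_1^{(i)}$ each contain more than $6n^{1/5}$ vertices of $H^+$, so the number of occurrences $m$ satisfies
\[
m \cdot 6n^{1/5} < r \le n^{3/5}.
\]
Hence $m < n^{2/5}/6$, and the total number of bases added is $6m \le n^{2/5}$, as required.

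The main obstacle is justifying disjointness of these neighborhoods across occurrences. Suppose $j \in D_1^{(i)} \cap D_1^{(i')}$ with $i < i'$. Then $z_i$ and $z_{i'}$ are at distance at most $2$ in the final graph $H^+$, which contradicts \autoref{lem:x2-dist3}. Likewise, $z_{i'} \in D_1^{(i)}$ is excluded. One more point needs care: the neighborhood counted at time $z_i$ must still be a neighborhood in the final graph. This holds because \autoref{lem:x2-dist3}'s proof shows these sets are unchanged in the final $H^+$. This completes the plan.
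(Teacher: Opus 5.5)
Your final argument is essentially the paper's proof, and it is correct: by \autoref{lem:x2-dist3} the neighborhoods of the $z_i$ are frozen and pairwise disjoint in the final $H^+$, each has size more than $6n^{1/5}$, and they all sit inside at most $r^* = n^{3/5}$ vertices, so there are at most $n^{2/5}/6$ occurrences of Step 2-b and hence at most $n^{2/5}$ bases. You never establish your opening goal of $n^{1/5}/6$ occurrences, and the bound is not needed, so drop it together with the detours through $|Y'_j|$ and base sizes.
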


\begin{proof}
    Let $Z = [z_1, z_2, \dots]$ be the list defined in \autoref{lem:x2-dist3}. Then, by \autoref{lem:x2-dist3}, the sets $N_{H^+}(z_1), N_{H^+}(z_2), \dots$ are disjoint for the final graph $H^+$. Moreover, we have $|N_{H^+}(z_i)| > 6n^{1/5} \ (= \Delta)$ because they were not processed in Step 2-a. Since $H^+$ has at most $r^* = n^{3/5}$ vertices, we have $|Z| \leq \frac{n^{3/5}}{6n^{1/5}} = \frac{1}{6} n^{2/5}$. Thus, at most $6|Z| \leq n^{2/5}$ bases are added to $\mathcal{S}_1$.
\end{proof}

\begin{lemma} \label{lem:special-valid}
    For graphs with odd girth at least 29, the inputs for (1) online group coloring and (2) the $(n^{2/5}, 24)$-subroutine $\mathcal{S}_1$ are valid.
\end{lemma}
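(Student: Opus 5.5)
The lemma bundles several requirements, and I will check each against the definitions of online group coloring and of the $(r^*,d^*)$-subroutine. Coloring correctness then follows because disjoint palettes are used for $\mathcal{S}_1$ and for online group coloring. Adjacent vertices colored inside $\mathcal{S}_1$ are handled by $\mathcal{S}_1$ itself, and adjacent vertices colored by online group coloring are handled by \autoref{thm:group-coloring-1}, provided the inputs are valid.

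\textbf{The $(n^{2/5},24)$-subroutine $\mathcal{S}_1$.} There are three guarantees to verify.
\begin{enumerate}
\item Every added base has even-diameter at most $24$; this is exactly \autoref{lem:x2-diameter}.
\item At most $n^{2/5}$ bases are added; this is \autoref{lem:x2-size}.
\item Every coloring query $v$ is adjacent to a vertex of the base it is assigned to. In Step 1 this holds by construction, since we pick a base adjacent to $v$. After Step 2-b we return to Step 1, and $v$ is adjacent to the new base $\bigcup_{j\in D_0}X_j = X_i$, because $v\in Y_i$ and the outer $(n^{3/5},2)$-subroutine guarantees that $v$ has a neighbor in $X_i$.
\end{enumerate}

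\textbf{Online group coloring.} There are three requirements here as well.
\begin{enumerate}
\item The degree bound at arrival time. When $v$ is put into group $i$ in Step 2-a, we have $|N_{H^+}(i)|\le\Delta$. The neighbor graph $H$ is a subgraph of $H^+$, since the group-coloring group $i$ is a subset of $Y'_i$ and every edge between such subsets is recorded in $H^+$. Hence $\deg_H(i)\le\Delta$.
\item No edges inside a group. Here the odd girth is used. Suppose $y_1,y_2\in Y'_i\subseteq Y_i$ are adjacent, and let $x_1,x_2\in X_i$ be their respective neighbors. Since $X_i$ has even-diameter at most $2$, we get an odd closed walk $y_1\to x_1\to\cdots\to x_2\to y_2\to y_1$ of length at most $5<29$, a contradiction.
\item Group labels are revealed on arrival, and this is immediate.
\end{enumerate}

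\textbf{The main subtlety.} The group-coloring instance must be a well-defined online instance: vertices colored by $\mathcal{S}_1$ must not interfere, and vertices of $Y'_i$ that were not fed to group coloring must be accounted for. Some vertices of $Y'_i$ (those that triggered Step 2-b) are added to $Y'_i$ and hence to $H^+$, yet are colored in $\mathcal{S}_1$. This is harmless. It only adds edges to $H^+$ and not to $H$, so $H\subseteq H^+$ still holds. Palette disjointness handles every cross-type adjacency between a vertex colored by group coloring and one colored by $\mathcal{S}_1$.

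I will also invoke \autoref{remark:alg-future}: once a group's $Y'_i$ is frozen, no further vertex of that group enters group coloring. The degree guarantee is checked at each arrival, so later growth of degrees is allowed by the definition.

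Combining these checks, both inputs are valid. Since the two palettes are disjoint from each other and from the outer colors, the resulting coloring is proper.
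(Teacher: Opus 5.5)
Your proposal is correct and follows essentially the same route as the paper. You get the degree bound from Step 2-a together with $H \subseteq H^+$, rule out intra-group edges via an odd closed walk of length at most $5$ through $X_i$, and obtain the three guarantees for $\mathcal{S}_1$ from \autoref{lem:x2-diameter}, \autoref{lem:x2-size}, and Step 1. Your extra remarks are accurate detail that the paper leaves implicit: the Step 2-b vertex lies in $Y'_i$ but is colored in $\mathcal{S}_1$, and the palettes are disjoint.
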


\begin{proof}
    For (1) online group coloring, the degree conditions are met due to Step 2-a. Also, there are no edges connecting vertices in the same group, because there are no edges inside each $Y_i$. Otherwise, given that each base $X_i$ has even-diameter at most $d^* = 2$, there would exist an odd closed walk of length at most $d^* + 3 = 5$, by the same argument as in the proof of \autoref{lem:special-part-b}. This contradicts that the odd girth is at least 29.

    For (2) the $(n^{2/5}, 24)$-subroutine $\mathcal{S}_1$, the added bases have even-diameter $\leq 24$, due to \autoref{lem:x2-diameter}. When the vertices are added to $\mathcal{S}_1$, they are adjacent to one of its bases, due to Step 1 of the algorithm. At most $n^{2/5}$ bases are added, due to \autoref{lem:x2-size}.
\end{proof}

\begin{lemma} \label{lem:special-part-a}
    For graphs with odd girth at least 29, \autoref{alg:main-0.4} solves the $(n^{3/5}, 2)$-subroutine deterministically using $O(n^{2/5})$ colors.
\end{lemma}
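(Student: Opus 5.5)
The plan is to split the claim into correctness (a proper coloring) and a color count. Throughout, the palettes of the two components are kept disjoint: one palette for the online group coloring instance and another for the $(n^{2/5}, 24)$-subroutine $\mathcal{S}_1$.

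\textbf{Correctness.} Every coloring query $v$ is colored either in Step 1 (inside $\mathcal{S}_1$) or in Step 2-a (online group coloring). Step 2-b always returns to Step 1, since $v$ is then adjacent to the new base $X_i$, so every vertex does get colored. By \autoref{lem:special-valid}, both instances receive valid inputs, so each solver produces a proper coloring among the vertices it handles.

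\begin{itemize}
\item For an edge between two vertices both colored in $\mathcal{S}_1$, the colors differ by correctness of the $\mathcal{S}_1$ algorithm, which colors properly against all earlier vertices in its own groups.
\item For an edge between two vertices both colored by online group coloring, the colors differ by \autoref{thm:group-coloring-1}. Validity here also uses that no edges lie inside a single $Y_i$.
\item For an edge between a vertex colored in $\mathcal{S}_1$ and one colored by online group coloring, the palettes are disjoint, so there is no conflict.
\end{itemize}

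The subroutine definition only asks us to avoid conflicts within $Y_1 \cup \dots \cup Y_r$, and every vertex of these groups falls into one of the three cases above. Hence the output is a proper coloring.

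\textbf{Color count.} Online group coloring with $\Delta = 6n^{1/5}$ uses at most $\Delta^2 + 2 = 36 n^{2/5} + 2 = O(n^{2/5})$ colors by \autoref{thm:group-coloring-1}. For $\mathcal{S}_1$, \autoref{lem:x2-size} guarantees at most $n^{2/5}$ bases, so \autoref{lem:special-part-b} solves it with $n^{2/5}$ colors. Summing gives $O(n^{2/5})$.

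\textbf{Main subtlety.} The delicate point is checking the online group coloring precondition. The degree of $g(v)$ in the neighbor graph must be at most $\Delta$ \emph{at the moment $v$ arrives}. This holds because the neighbor graph is a subgraph of $H^+$ (it only involves the vertices placed into $Y'_j$), and Step 2-a is entered only when $|N_{H^+}(i)| \le \Delta$ after updating $H^+$ with $v$.

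I also need to confirm that the vertices added to $Y'_i$ in Step 2-b (those recolored via Step 1) cause no issue. They are colored by $\mathcal{S}_1$, not by group coloring, so the neighbor graph of the group coloring instance is still contained in $H^+$. This is already handled inside \autoref{lem:special-valid}, so I would cite it rather than reprove it.
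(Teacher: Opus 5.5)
Your proposal is correct and matches the paper's proof: validity of both sub-instances comes from \autoref{lem:special-valid}, and the color count is $\Delta^2+2 = 36n^{2/5}+2$ from \autoref{thm:group-coloring-1} plus $n^{2/5}$ from \autoref{lem:special-part-b}. Your explicit three-case check that the coloring is proper, using disjoint palettes, only spells out what the paper compresses into ``the algorithm functions properly.'' The same goes for your remark that the Step 2-b vertex lies in $Y'_i$ but is colored by $\mathcal{S}_1$.
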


\begin{proof}
    \autoref{lem:special-valid} shows that the algorithm functions properly. Therefore, we use at most $\Delta^2 + 2 = 36n^{2/5} + 2$ colors for online group coloring due to \autoref{thm:group-coloring-1} and $\Delta = 6n^{1/5}$, and we use at most $n^{2/5}$ colors in the $(n^{2/5}, 24)$-subroutine $\mathcal{S}_1$ due to \autoref{lem:special-part-b}. In total, we use $O(n^{2/5})$ colors.
\end{proof}

\begin{theorem}\label{thm:main-1}
    For graphs of odd girth at least 29, there is a deterministic online coloring algorithm that uses $O(n^{2/5})$ colors.
\end{theorem}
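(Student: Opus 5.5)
The proof will simply chain the three parts established above. By \autoref{lem:special-base-case}, it suffices to exhibit a deterministic algorithm that solves the $(n^{3/5}, 2)$-subroutine with $O(n^{2/5})$ colors on graphs of odd girth at least 29. Such an algorithm is \autoref{alg:main-0.4}. \autoref{lem:special-part-a} shows that it works and uses $O(n^{2/5})$ colors, given that its inner $(n^{2/5}, 24)$-subroutine $\mathcal{S}_1$ is solved by the algorithm of \autoref{lem:special-part-b} with $n^{2/5}$ colors. So the plan is: instantiate $\mathcal{S}_1$ with the algorithm of \autoref{lem:special-part-b}; plug it into \autoref{alg:main-0.4} to get $\mathcal{A}$; then plug $\mathcal{A}$ into the reduction of \autoref{alg:reduction}.

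The point to check is that each layer's input validity holds in the composed algorithm.
\begin{itemize}
\item For $\mathcal{S}_0$, the reduction of \autoref{lem:special-base-case} already guarantees that bases have even-diameter $\le 2$ and that there are at most $n^{3/5}$ of them.
\item For $\mathcal{S}_1$ and the online group coloring instance, \autoref{lem:special-valid} applies.
\end{itemize}
All three palettes (First-Fit, online group coloring, $\mathcal{S}_1$) are disjoint, so no conflict arises between vertices colored by different components. Within each component, properness is given by \autoref{thm:group-coloring-1} and \autoref{lem:special-part-b}, respectively.

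The count is $\lceil n^{2/5}\rceil + (36n^{2/5}+2) + n^{2/5} = O(n^{2/5})$. Finally, the assumption that $n$ is known in advance is removed by \autoref{lem:n-is-known} with $f(n) = Cn^{2/5}$, costing only a factor of 4.

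The only mildly delicate step is confirming that the parameters $n^{3/5}$ and $n^{2/5}$ used inside the subroutines match the guarantees: at most $n/\lceil n^{2/5}\rceil \le n^{3/5}$ bases in $\mathcal{S}_0$, and at most $n^{2/5}$ bases in $\mathcal{S}_1$ by \autoref{lem:x2-size}. Both were established earlier, so the theorem follows directly.
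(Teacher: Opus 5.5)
Your proposal is correct and matches the paper's proof. The paper likewise composes \autoref{lem:special-base-case} with \autoref{lem:special-part-a}, and the latter already uses \autoref{lem:special-part-b} for $\mathcal{S}_1$ and \autoref{lem:special-valid} for input validity; your explicit appeal to \autoref{lem:n-is-known} to remove the known-$n$ assumption is a harmless addition, since the paper treats that assumption as a standing convention.
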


\begin{proof}
    By combining \autoref{lem:special-base-case} and \autoref{lem:special-part-a}, we obtain a deterministic online coloring algorithm that uses $O(n^{2/5})$ colors for graphs with odd girth at least 29.
\end{proof}

\section{The Algorithm with $O(n^\varepsilon)$ colors} \label{sec:general}

In this section, we generalize the idea from \autoref{sec:special} to multiple layers. We reduce the exponent --- from $\frac{1}{2}$ in Kierstead's algorithm and $\frac{2}{5}$ in \autoref{sec:special} --- to an arbitrarily small positive value for graphs with sufficiently large odd girth. In other words, we prove \autoref{thm:generalize-1}.

\subsection{The Generalized Lemmas}

We revisit three key steps in constructing our $O(n^{2/5})$-color algorithm: \autoref{lem:special-base-case} (Part A), \autoref{lem:special-part-a} (Part B), and \autoref{lem:special-part-b} (Part C). Below, we generalize these lemmas from the specific $O(n^{2/5})$ setting to one with general parameters.

\begin{lemma} \label{lem:general-a}
    Let $c \geq 1$ be an integer parameter. The online coloring problem with $n$ vertices can be reduced to the $(\frac{n}{c}, 2)$-subroutine, deterministically, using at most $c$ extra colors.
\end{lemma}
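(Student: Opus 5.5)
The plan is to reuse the reduction of \autoref{lem:special-base-case} (\autoref{alg:reduction}), with the First-Fit palette of size $\lceil n^{2/5} \rceil$ replaced by $c$. We run a single $(\frac{n}{c}, 2)$-subroutine $\mathcal{S}_0$ and give it a color palette disjoint from $\{1, \dots, c\}$. When a vertex $v$ arrives, we process it as follows:
\begin{enumerate}
    \item If some color in $\{1, \dots, c\}$ is not used by any neighbor of $v$, we color $v$ by First-Fit within these $c$ colors.
    \item Otherwise, if $v$ is adjacent to some existing base of $\mathcal{S}_0$, we issue a coloring query, adding $v$ to the group of one such base.
    \item Otherwise, we add $N(v)$ (taken in the current graph) as a new base and then issue the query as in Step 2.
\end{enumerate}

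Correctness comes first. The palettes are disjoint, so a First-Fit vertex never conflicts with an $\mathcal{S}_0$ vertex. First-Fit vertices never conflict with each other by construction. Vertices colored inside $\mathcal{S}_0$ are properly colored against adjacent vertices of $Y_1 \cup \dots \cup Y_r$ by the subroutine's own guarantee. It remains to check that the subroutine's input guarantees hold. Each added base $N(v)$ has even-diameter at most $2$, since any two $x_1, x_2 \in N(v)$ are joined by the walk $x_1 \to v \to x_2$. Each query vertex is adjacent to a vertex of its base, which holds trivially in Step 3 because $v$ is adjacent to every vertex of $N(v)$.

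The remaining and main step is to bound the number of bases by $\frac{n}{c}$, following the counting in \autoref{thm:kierstead}. Let $w_i$ be the vertex that created base $X_i = N(w_i)$.
\begin{itemize}
    \item Since $w_i$ was not colored by First-Fit, it sees all $c$ colors among its neighbors. Hence $|X_i| \geq c$.
    \item Since $w_i$ was not handled in Step 2, $X_i$ is disjoint from every earlier base $X_j$ with $j < i$. This uses the fact that bases are fixed sets once added: $X_i$ is computed at $w_i$'s arrival, and $N(w_i)$ contains only vertices present then.
\end{itemize}
The bases are therefore pairwise disjoint subsets of $V$, each of size at least $c$. They also exclude their own creator, though this is not needed. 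So $r \cdot c \leq n$, which gives $r \leq \frac{n}{c}$.

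With both guarantees established, the colors used are the $c$ First-Fit colors plus whatever $\mathcal{S}_0$ uses, so the reduction costs at most $c$ extra colors. The only subtle point I expect is making explicit that bases are frozen at creation time, so that the disjointness argument refers to the bases as they were when created.
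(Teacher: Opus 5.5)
Your proposal is correct and matches the paper's proof: the paper reruns the reduction of \autoref{lem:special-base-case} with $c$ First-Fit colors in place of $\lceil n^{2/5}\rceil$, and bounds the number of bases by $n/c$ with the same disjointness-and-size argument from \autoref{thm:kierstead} that you write out. Your extra checks (even-diameter at most $2$ for $N(v)$, bases fixed at creation, no odd-girth assumption needed for this step) fill in details the paper leaves implicit.
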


\begin{proof}
    This follows from the same argument as \autoref{lem:special-base-case}, where we use $c$ colors for First-Fit (instead of $\lceil n^{2/5} \rceil$ colors). Then, the number of bases created is at most $\frac{n}{c}$.
\end{proof}

\begin{lemma} \label{lem:general-b}
    Let $\Delta \geq 1$ be a real parameter. For graphs with odd girth at least $d^* + 5$, the $(r^*, d^*)$-subroutine can be reduced to the $(\frac{6r^*}{\Delta}, 5d^*+14)$-subroutine, deterministically, using at most $\Delta^2 + 2$ extra colors.
\end{lemma}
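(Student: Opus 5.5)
The plan is to run \autoref{alg:main-0.4} verbatim, with the $(n^{3/5},2)$-subroutine replaced by the given $(r^*,d^*)$-subroutine and $\mathcal{S}_1$ replaced by a $(\frac{6r^*}{\Delta}, 5d^*+14)$-subroutine; the threshold in Step 2 becomes the given $\Delta$. Online group coloring gets its own palette. For a coloring query of $v$ into $Y_i$, the three steps are as before. If $v$ neighbors a base of $\mathcal{S}_1$, color it there. Otherwise add $v$ to $Y'_i$ and update $H^+$. If now $|N_{H^+}(i)|\le\Delta$, color $v$ by \autoref{thm:group-coloring-1} in group $i$. Otherwise, add the six sets $\bigcup_{j\in D_k}X_j$ and $\bigcup_{j\in D_k}Y'_j$ ($k=0,1,2$) as bases of $\mathcal{S}_1$ and return to Step 1. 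The extra colors are then at most $\Delta^2+2$ from \autoref{thm:group-coloring-1}. The proof is therefore a validity check, following the chain \autoref{lem:x2-diameter}--\autoref{lem:special-valid} with general parameters.

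\textbf{Validity of online group coloring.} The degree condition holds because the neighbor graph is a subgraph of $H^+$ and Step 2-a checks the degree. We also need that no edge lies inside a group $Y_i$. If $y_1,y_2\in Y_i$ were adjacent, with neighbors $x_1,x_2\in X_i$, then $y_1\to x_1\to\cdots\to x_2\to y_2\to y_1$ would be an odd closed walk of length at most $d^*+3$. This contradicts odd girth $\ge d^*+5$, since odd girth is also the shortest odd closed walk.

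\textbf{Diameter bound.} Repeat the walk construction of \autoref{lem:x2-diameter}. For two points in $\bigcup_{j\in D_k}X_j$, take a length-$2k$ walk in $H^+$ between their groups (through $i$). Concatenate $2k+1$ even segments of length $\le d^*$ inside the bases, joined by steps of the form $e'\to e\to s\to s'$ of length $3$. This gives an even walk of length $\le (d^*+3)(2k+1)-3\le 5d^*+12$, and adding two steps handles the $Y'$ unions, giving $\le 5d^*+14$. Parity is automatic: each connector has length $3$ and there are $2k$ of them, so the total is even. Also, every vertex entering $\mathcal{S}_1$ neighbors one of its bases by Step 1.

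\textbf{Counting bases (the main obstacle).} The delicate step is the analogue of \autoref{lem:x2-dist3}: the ``frozen'' argument showing that the balls of radius $2$ around the triggering vertices $z_1,z_2,\dots$ never change later. The argument is as follows. Any later vertex creating an edge into $D_k$ would be adjacent to some $Y'_j$ with $j\in D_k$, hence to the base $\bigcup_{j\in D_k}Y'_j$, so it would have been handled in Step 1. By \autoref{remark:alg-future}, vertices arriving into groups of $D_0\cup D_1\cup D_2$ never reach Step 2. I would check that this argument uses nothing about $d^*$, which it does not.

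It follows that the $z$'s are pairwise at distance $\ge 3$ in the final $H^+$, so the neighborhoods $N_{H^+}(z_i)$ are disjoint, each of size $>\Delta$, inside at most $r^*$ vertices. Hence $|Z|<r^*/\Delta$, and at most $6r^*/\Delta$ bases are added, which matches the subroutine's guarantee and completes the reduction.
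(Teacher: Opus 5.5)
Your proposal is correct and follows essentially the same route as the paper. It reruns \autoref{alg:main-0.4} with general parameters and re-verifies \autoref{lem:x2-diameter}, \autoref{lem:x2-dist3}, \autoref{lem:x2-size} and \autoref{lem:special-valid}, which is exactly what the paper's one-paragraph proof delegates to. You also check that the hypothesis of odd girth at least $d^*+5$ is needed only to rule out edges inside a group (via an odd closed walk of length at most $d^*+3$), and that the freezing argument does not depend on $d^*$; both points match the paper's reasoning.
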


\begin{proof}
    This follows from the same argument as \autoref{lem:special-part-a}, where we use $\Delta$ as the parameter for online group coloring. The even-diameter bound $5d^* + 14$ comes from the proof of \autoref{lem:x2-diameter}. The maximum number of bases created, $\frac{6r^*}{\Delta}$, comes from the proof of \autoref{lem:x2-size}. The number of extra colors used, $\Delta^2 + 2$, comes from \autoref{thm:group-coloring-1}. We need the odd girth requirement to guarantee that there is no edge inside the same group in online group coloring; see the proof of \autoref{lem:special-valid}.
\end{proof}

\begin{lemma} \label{lem:general-c}
    For graphs with odd girth at least $d^* + 5$, the $(r^*, d^*)$-subroutine can be solved deterministically using at most $r^*$ colors.
\end{lemma}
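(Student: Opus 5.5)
The plan is to generalize the argument of \autoref{lem:special-part-b} verbatim, replacing $24$ by $d^*$ and $29$ by $d^*+5$. The algorithm is trivial: whenever a coloring query places a vertex $v$ into group $Y_i$, we color $v$ with color $i$. Since at most $r^*$ bases are ever added, the indices satisfy $i \leq r^*$, so at most $r^*$ colors are used. It remains only to check that this coloring is proper.

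First, different groups receive different colors, so any two adjacent vertices in distinct groups automatically get distinct colors. The only thing to rule out is an edge between two vertices $y_1, y_2$ of the same group $Y_i$. Suppose such an edge exists. By the guarantee on coloring queries, there are vertices $x_1, x_2 \in X_i$ with $x_j$ adjacent to $y_j$ for $j = 1, 2$. Since $X_i$ has even-diameter at most $d^*$, there is an even-length walk $x_1 \to \cdots \to x_2$ of length at most $d^*$; this walk may pass outside $X_i$, which is harmless. Concatenating gives the closed walk
$y_1 \to x_1 \to \cdots \to x_2 \to y_2 \to y_1$.
Its length is odd (even plus $3$) and at most $d^* + 3$.

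By the definition of odd girth as the length of the shortest odd closed walk, the graph would then have odd girth at most $d^*+3 < d^*+5$, a contradiction. Note that $d^*$ is even, so $d^*+5$ is the next admissible odd value after $d^*+3$, and the hypothesis is exactly what is needed. The one subtlety is that the walk is built in the current graph at the time of the query. This causes no problem: edges are never removed, and the odd girth hypothesis applies to the final graph, which contains the current one. There is no real obstacle here; the only point requiring care is invoking the closed-walk characterization of odd girth rather than cycles.
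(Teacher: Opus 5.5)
Your proposal is correct and follows essentially the same route as the paper: color every vertex of $Y_i$ with color $i$, and exclude an edge inside $Y_i$ via the odd closed walk $y_1 \to x_1 \to \cdots \to x_2 \to y_2 \to y_1$ of length at most $d^*+3$. This is exactly the argument of \autoref{lem:special-part-b} with $24$ replaced by $d^*$, which the paper cites as its proof.
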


\begin{proof}
    This follows from the same argument as \autoref{lem:special-part-b}.
\end{proof}

\subsection{The Generalized Algorithm}

Reflecting on the previous algorithms, Kierstead's algorithm for graphs with odd girth $\geq 7$ reduced the online coloring problem via \autoref{lem:general-a} and \autoref{lem:general-c}, achieving $O(n^{1/2})$ colors. Our algorithm for graphs with odd girth $\geq 29$ reduced the online coloring problem via \autoref{lem:general-a}, \autoref{lem:general-b}, and \autoref{lem:general-c}, achieving $O(n^{2/5})$ colors.

The good news is that the reduction via \autoref{lem:general-b} can be applied multiple times --- this is key to constructing a further improved algorithm. More specifically, our strategy is to apply this reduction through $k$ layers to achieve $O(n^{\frac{2}{k+4}})$ colors.

\begin{theorem} \label{thm:generalize-2}
    Let $k \geq 0$ be an integer. For graphs of odd girth at least $\frac{11}{2} \cdot 5^k + \frac{3}{2}$, there exists a deterministic online coloring algorithm that uses $O(k \cdot n^{\frac{2}{k+4}})$ colors.
\end{theorem}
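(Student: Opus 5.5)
We chain the three generalized lemmas: apply \autoref{lem:general-a} once, then \autoref{lem:general-b} $k$ times, and finish with \autoref{lem:general-c}. Set $m := n^{\frac{1}{k+4}}$, so the target is $O(k m^2)$ colors.

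\textbf{Step 1 (Part A).} Apply \autoref{lem:general-a} with $c := \lceil m^2 \rceil$. This reduces online coloring to the $(r_0, d_0)$-subroutine with $r_0 = n/c \le m^{k+2}$ and $d_0 = 2$, at a cost of $O(m^2)$ colors.

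\textbf{Step 2 (Part B, $k$ times).} For $t = 1, \dots, k$, apply \autoref{lem:general-b} with $\Delta := 6m$. This turns the $(r_{t-1}, d_{t-1})$-subroutine into the $(r_t, d_t)$-subroutine, where
\[
r_t = \frac{6 r_{t-1}}{6m} = \frac{r_{t-1}}{m}, \qquad d_t = 5 d_{t-1} + 14 .
\]
Each application costs $\Delta^2 + 2 = 36m^2 + 2$ extra colors. By induction, $r_t \le m^{k+2-t}$, so after $k$ layers $r_k \le m^2$.

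For the diameters, note that $d_t + \frac{7}{2} = 5\,(d_{t-1} + \frac{7}{2})$ and $d_0 + \frac{7}{2} = \frac{11}{2}$. Hence
\[
d_t = \tfrac{11}{2}\cdot 5^t - \tfrac{7}{2},
\]
and the largest diameter is $d_k$.

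\textbf{Step 3 (Part C).} Solve the final $(r_k, d_k)$-subroutine by \autoref{lem:general-c}, using at most $r_k \le m^2$ colors.

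\textbf{Odd girth requirement.} Each invocation of \autoref{lem:general-b} or \autoref{lem:general-c} on a $(\cdot, d_t)$-subroutine needs odd girth at least $d_t + 5$. The sequence $d_t$ is increasing, so the binding constraint is
\[
d_k + 5 = \tfrac{11}{2}\cdot 5^k + \tfrac{3}{2},
\]
which is exactly the hypothesis of the theorem. (Check: $k=0$ gives $7$ and $k=1$ gives $29$.)

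\textbf{Composition.} The reductions use disjoint color palettes, so the outputs combine into a proper coloring. The total is
\[
O(m^2) + k\,(36m^2 + 2) + m^2 = O(k\, n^{\frac{2}{k+4}})
\]
for $k \ge 1$; the case $k=0$ gives $O(n^{1/2})$ directly.

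\textbf{Main obstacle.} The delicate point is checking that each reduction is applied validly. \autoref{lem:general-b} was stated for a black-box $(r^*, d^*)$-subroutine, and here it is composed recursively. Two things need care:
\begin{itemize}
\item The base-count guarantee $r \le r^*$ must hold as stated at every layer. The bound $6r^*/\Delta$ from \autoref{lem:x2-size} only needs $H^+$ to have at most $r^*$ vertices, so this holds, and non-integrality is harmless since the bounds are only inequalities.
\item The girth condition must hold for every layer simultaneously, not just the last. This is settled by the monotonicity of $d_t$.
\end{itemize}
Finally, $n$ may be assumed known by \autoref{lem:n-is-known}.
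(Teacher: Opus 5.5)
Your proposal is correct and follows the paper's proof essentially verbatim. The chain is the same: \autoref{lem:general-a} with $c=\lceil n^{2/(k+4)}\rceil$, then $k$ applications of \autoref{lem:general-b} with $\Delta=6n^{1/(k+4)}$ giving $d_t=5d_{t-1}+14=\frac{11}{2}\cdot 5^t-\frac{7}{2}$, then \autoref{lem:general-c}, with odd girth $d_k+5$ sufficing for every layer. Your extra remarks (monotonicity of $d_t$, and treating $k=0$ separately since the literal bound $O(k\cdot n^{2/(k+4)})$ degenerates there) are minor clarifications that the paper leaves implicit.
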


\begin{proof}
    First, we reduce the online coloring problem to the $(n^{\frac{k+2}{k+4}}, 2)$-subroutine using \autoref{lem:general-a} with parameter $c := \lceil n^{\frac{2}{k+4}} \rceil$. This uses at most $\lceil n^{\frac{2}{k+4}} \rceil$ colors.
    
    Next, we apply the reduction from \autoref{lem:general-b}, $k$ times, each with parameter $\Delta := 6n^{\frac{1}{k+4}}$. In the $\ell$-th reduction $(\ell = 1, \dots, k)$, we reduce the $(n^{\frac{k-\ell+3}{k+4}}, a_{\ell-1})$-subroutine to the $(n^{\frac{k-\ell+2}{k+4}}, a_{\ell})$-subroutine, where $a_{\ell}$ is defined by $a_0 = 2$ and $a_{\ell} = 5 a_{\ell-1} + 14 \ (\ell = 1, 2, \dots)$. This uses at most $k \cdot (\Delta^2 + 2) = k \cdot (36n^{\frac{2}{k+4}} + 2)$ colors in total.

    Finally, we solve the $(n^{\frac{2}{k+4}}, a_k)$-subroutine by \autoref{lem:general-c}, using $n^{\frac{2}{k+4}}$ colors.

    Overall, the number of colors used is $O(k \cdot n^{\frac{2}{k+4}})$. The odd girth requirement is needed to use \autoref{lem:general-b} and \autoref{lem:general-c}, but odd girth $\geq a_k + 5$ is sufficient for all steps of the algorithm, which equals $\frac{11}{2} \cdot 5^k + \frac{3}{2}$. This follows from the closed form $a_k = \frac{11}{2} \cdot 5^k - \frac{7}{2}$.
\end{proof}

In conclusion, with \autoref{thm:generalize-2} in hand, we claim our main theorem (\autoref{thm:generalize-1}): for every $\varepsilon > 0$, there exists a constant $g' \in \{3, 5, 7, \dots\}$ such that graphs with odd girth at least $g'$ can be deterministically colored online using $O(n^{\varepsilon})$ colors. \autoref{tab:colors-girth} shows how the number of colors used depends on the odd girth. The algorithm clearly runs in polynomial time.

\subsection{Performance with Non-Constant Odd Girth}

When the number of layers is $k = \Theta(\log n)$, it follows from \autoref{thm:generalize-2} that the number of colors used in the algorithm is just $O(\log n)$. Hence, we obtain the following corollary.

\begin{corollary} \label{cor:generalize-3}
    Let $c$ be a constant such that $0 < c < 1$. For graphs with odd girth $\Omega(n^c)$, there exists a deterministic online coloring algorithm that uses $O(\log n)$ colors.
\end{corollary}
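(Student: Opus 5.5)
The plan is to derive \autoref{cor:generalize-3} directly from \autoref{thm:generalize-2}, choosing the number of layers $k$ as a function of $n$. Here $n$ is known in advance, by \autoref{lem:n-is-known} applied with $f(n)=O(\log n)$. The proof of \autoref{thm:generalize-2} never uses that $k$ is a constant: the reduction via \autoref{lem:general-a} and the $k$ applications of \autoref{lem:general-b} only need the odd-girth condition $g \geq a_k + 5 = \frac{11}{2}\cdot 5^k + \frac32$. So for each $n$ we may pick $k = k(n)$ and get at most $\lceil n^{2/(k+4)}\rceil + k(36 n^{2/(k+4)}+2) + n^{2/(k+4)}$ colors. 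I would redo this count explicitly rather than quote the $O(\cdot)$, since the hidden constant must not depend on $k$.

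Next, fix the choice of $k$. The odd girth is at least $\alpha n^c$ for some constant $\alpha>0$ (for large $n$). Take $k := \max\{0, \lfloor \log_5(\alpha n^c) \rfloor - 2\}$. Then $\frac{11}{2}\cdot 5^k + \frac32 \leq \frac{11}{2}\cdot\frac{\alpha n^c}{25} + \frac32 \leq \alpha n^c$ for $n$ large, so the girth hypothesis holds. We also get $k \geq c\log_5 n - O(1)$, and hence $\frac{2}{k+4} \cdot \ln n \leq \frac{2\ln n}{c\log_5 n - O(1)} = O(1/c) = O(1)$. Therefore $n^{2/(k+4)} = e^{O(1)} = O(1)$.

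Plugging this in, the total is $O(1) + k\cdot O(1) + O(1) = O(k) = O(\log n)$, since $k \leq \log_5 n + O(1)$ because $c<1$ and $\alpha$ is constant. Small $n$ are absorbed into the constant. If $n$ is not known, \autoref{lem:n-is-known} costs only a factor of $4$. Alternatively, the algorithm can fix $k$ from the guessed $n$, which is legitimate because the odd-girth hypothesis is monotone: a graph on fewer vertices than the final $n$ has the same odd girth.

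The only real obstacle is uniformity in $k$. The bound $O(k\cdot n^{2/(k+4)})$ in \autoref{thm:generalize-2} must have a constant independent of $k$, which the explicit count above confirms. In addition, the parameters $n^{(k-\ell+3)/(k+4)}$ and $\Delta = 6n^{1/(k+4)}$ remain valid even when $\Delta$ is a constant. \autoref{lem:general-b} only requires $\Delta \geq 1$, which holds, so no step breaks.
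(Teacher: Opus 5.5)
Your proposal is correct and is essentially the paper's argument: apply \autoref{thm:generalize-2} with $k=\Theta(\log n)$, so that $n^{2/(k+4)}=O(1)$ and the color count is $O(k)=O(\log n)$. The paper takes $k=(c-\varepsilon)\log_5 n$ and absorbs the constant hidden in $\Omega(n^c)$ through the slack between $n^{c-\varepsilon}$ and $n^c$, whereas you absorb it with $\lfloor\log_5(\alpha n^c)\rfloor-2$; your checks that $k$ is an integer and that the constant is uniform in $k$ are details the paper leaves implicit.

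Your side remark on unknown $n$, which the paper does not treat since it assumes $n$ is known, is too quick. Keeping the total at $O(\log n)$ with a phase-by-phase guessing scheme forces the guesses to grow super-geometrically (e.g.\ by squaring). The active guess $N$ can then be polynomially larger than the true $n$, so $k$ must be chosen from $N$ using a constant fraction of $c$ in place of $c$. Monotonicity of odd girth under subgraphs does not give this.
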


\begin{proof}
    Applying \autoref{thm:generalize-2} with $k = (c - \varepsilon) \log_5 n$ for some $\varepsilon \ (0 < \varepsilon < c)$, we obtain a deterministic online algorithm that colors graphs with odd girth at least $\frac{11}{2} \cdot n^{c-\varepsilon} + \frac{3}{2}$, using only $O(\log n)$ colors. Therefore, \autoref{cor:generalize-3} holds.
\end{proof}

This substantially improves the result of $O(n^{(1-c)/2} \sqrt{\log n})$ colors by Nagy-Gy\"{o}rgy \cite{Nag10} and matches the bipartite graph bound of $\Theta(\log n)$ colors due to Lov\'{a}sz, Saks, Trotter \cite{LST89}.

\section{Conclusion}

In this paper, we showed a deterministic online coloring algorithm that uses $O(k \cdot n^{\frac{2}{k+4}})$ colors for graphs with odd girth $g \geq \frac{11}{2} \cdot 5^k + \frac{3}{2}$, for every integer $k \geq 0$. This gives the first improvement in 28 years for online coloring of graphs with large constant odd girth, improving upon the $O(n^{1/2})$-color algorithm of Kierstead (1998) \cite{Kie98}.

It is known that any online algorithm requires $\Omega(n^{\frac{1}{g-3}} / \log n)$ colors for graphs with general odd girth $g \geq 5$, as implied by the chromatic number bound of Krivelevich (1995) \cite{Kri95}. Therefore, for every constant $g$, there exists an instance requiring $\Omega(n^{\varepsilon})$ colors for some $\varepsilon > 0$. In this sense, since we achieve $O(n^{\varepsilon})$ colors for sufficiently large odd girth, our algorithm is in the same ballpark as this lower bound.

However, in terms of the speed of convergence, our exponent is approximately $\frac{2}{\log_5 g}$, which is much slower than the exponent $\frac{1}{g-3}$ in the lower bound. Therefore, it is natural to ask whether the following conjecture holds.

\begin{conjecture}
    For some constant $c \geq 1$, there exists an online coloring algorithm that achieves $O(n^{c/g})$ colors for graphs with constant odd girth $g$.
\end{conjecture}

\section*{Acknowledgments}

We are grateful to our professor Ken-ichi Kawarabayashi for helpful discussions and comments.

\printbibliography

\end{document}